\newtheorem{theorem}{Theorem}[section]
\newtheorem{corollary}[theorem]{Corollary}
\newtheorem{proposition}[theorem]{Proposition}
\newtheorem{definition}[theorem]{Definition}
\newtheorem{lemma}[theorem]{Lemma}
\newtheorem{fact}{Fact}[section]
\newcommand{\E}{\mathbb{E}}
\newcommand{\N}{\mathcal{N}}
\newcommand{\Title}{Diversity and Exploration in Social Learning}
\newenvironment{prevproof}[2]{\noindent {\em {Proof of {#1}~\ref{#2}:}}}{$\hfill\qed$\vskip \belowdisplayskip}
\newcommand{\xhdr}[1]{\vspace{2mm} \noindent{\bf #1}}
\begin{document}
\title{\Title}

\author{Nicole Immorlica}
\affiliation{\institution{Microsoft Research}}
\email{nicimm@microsoft.com}

\author{Jieming Mao}
\affiliation{\institution{University of Pennsylvania}}
\email{maojm517@gmail.com}

\author{Christos Tzamos}
\affiliation{\institution{University of Wiscosin-Madison}}
\email{tzamos@wisc.edu}

\begin{abstract}
  
In consumer search, there is a set of items. An agent has a prior over her value for each item and can pay a cost to learn the instantiation of her value.  After exploring a subset of items, the agent chooses one and obtains a payoff equal to its value minus the search cost.  We consider a sequential model of consumer search in which agents' values are correlated and each agent updates her priors based on the exploration of past agents before performing her search.  Specifically, we assume the value is the sum of a common-value component, called the quality, and a subjective score.  Fixing the variance of the total value, we say a population is more diverse if the subjective score has a larger variance.  We ask how diversity impacts average utility.  We show that intermediate diversity levels yield significantly higher social utility than the extreme cases of no diversity (when agents under-explore) or full diversity (when agents are unable to learn from each other) and quantify how the impact of the diversity level changes depending on the time spent searching. 
\end{abstract}
%
\keywords{social learning, heterogeneous agents, Weitzman index}
  

\maketitle


\section{Introduction}

Members of a society often leverage the experiences of others to inform their own choices.  A high school student contemplating colleges might ask former high school students about their decision-making processes: what schools were considered, why the chosen one was selected over the others, and so on.  A tourist contemplating restaurants might read reviews of former diners on review websites: what was the overall impression of the restaurant, what aspects contributed to this impression, etc.  This sort of communal learning is highly valuable.  By listening to experiences from the past, society can avoid repeating the mistakes of the past.  However, this benefit comes at a potentially significant cost.  As each individual optimizes their own choice, they might naturally select a tried-and-true option over an unknown one.  As a result, unknown options, which could have great potential, are never discovered.

The classic field of {\em social learning}, initiated by ~\cite{Banerjee,Bikhchandrietal}, models this problem in the following way.  Agents arrive sequentially and choose an action (e.g., search strategy) that results in some payoff based on the state of the world (e.g., the values of the items).  Each agent receives a private signal correlated with the state of the world (e.g., priors on item values) and also observes chosen actions of past agents.  The literature derives conditions under which the society may ``herd'' on a sub-optimal action, ignoring private signals and instead exploiting information gleaned from the history of actions.  

The nascent field of {\em incentivizing exploration} tries to use information design or mechanism design to prevent the exploitation of historical information.  In this line of work, there is a principal (the review website in the case of restaurants for instance), a sequence of agents (tourists in the restaurant example), and a set of options (the restaurants themselves).  Each option has an inherent value, which is common to all agents, and the goal is to discover the highest-value option as quickly as possible.  The principal either explicitly pays agents to try new options~\cite{FrazierKKK14}, or sends signals that cleverly obfuscate the past experiences in a way that convinces agents to try new options~\cite{KremerMP13,MansourSS15}.  While these approaches successfully mitigate the dangers of learning from the past, they require the existence of a principal to coordinate society's search.  

In this paper, we study decentralized exploration.  Our central hypothesis is that societies perform sufficient search, even without a coordinating principal, when there are {\em sufficiently diverse preferences}.  Intuitively, this diversity should be 1) high enough that agents occasionally explore even when past agents had great experiences, and 2) low enough that there is still value in observing the past.  In this case, then society can perhaps learn to select good options, both by learning from each individual agent as she investigates her options before making a choice, and by learning from the sequence of agents who make different choices than their peers.  We ask, is there a simple model and large range of parameters in which we can observe this effect?

We explore our hypothesis in the context of a standard model of consumer search, iterated over time.  In a single round of consumer search, an agent is presented with a set of items.  The agent knows the distribution of her value for each item and may pay a fixed cost to learn its instantiation.  The resulting utility of the agent is her value for the item she selects minus the costs she paid to learn her values.  For example, a high school student contemplating colleges might know her approximate value for each and, based on this information, pick a subset to visit prior to selecting one.  In our iterated version of this model, what an agent knows about her values depends on what options previous agents considered.  More precisely, before performing her search, each agent updates her priors on the values observed by previous agents during their searches. In the college choice problem, this corresponds to a high school student listening to the trip reports of students in prior years before selecting which subset of schools to visit.  

The degree to which historical information impacts an agent's search depends on the amount of correlation, or {\em diversity}, in the agents' preferences.  In order to study the impact of diversity on average utility, we introduce a model in which each option has a common value component, drawn IID from a zero-mean Gaussian with variance $1-\sigma^2$, and each agent/option pair has a private value component, also drawn IID from a zero-mean Gaussian with variance $\sigma^2$.  Thus the total value of each option for each agent, the sum of the common and private value components, is also a zero-mean Gaussian with a fixed variance of $1$.  The variance parameter, $\sigma^2$, can be thought of as the diversity level.  For $\sigma^2=0$, all the value of an option comes from the common value component, whereas for $\sigma^2=1$, all the value comes from the private value component.  Since the distribution of option values remains constant as we change the diversity level, we are able to compare the average value obtained by societies of different diversity levels.

Our theoretical and simulation results show that diversity helps in this simple model.  In particular, intermediate levels of diversity always yield higher average utilities than the extremes of zero diversity ($\sigma^2=0$), in which preferences are perfectly correlated, or total diversity ($\sigma^2=1$), in which preferences are independent across agents.  We further investigate the comparison between intermediate levels of diversity. Our main result, Theorem~\ref{thm:main-intermediate}, Theorem~\ref{thm:main-large} and simulations, identify three distinct behaviors: For very short time horizons, less diverse societies outperform more diverse ones as the benefits of the added exploration have yet to take effect.  For very long time horizons, less diverse societies again outperform more diverse ones as even the less diverse societies eventually perform sufficient exploration.  But for a very large range of intermediate time horizons, we see more diverse societies outperform less diverse ones.  This is because the less diverse societies tend to exploit historical information at the expense of exploration, similar to the phenomenon of herding in the social learning literature. 

In addition to our aforementioned results, in Section~\ref{sec:diamond}, we extend our results to a more general setting that allows the quality of an item to get a very high value $D$ with tiny probability $\frac 1 D$. This setting models situations where even though the society might be very diverse and agents have almost no common value, in the unlikely event that some clearly superior option with value $D$ appears, they all agree that its value is high. We use this model to capture settings where the item values are not very concentrated, as Gaussians are, but have heavier
probability tails. These happen in many natural situations where breakthroughs with significantly high value occur and the whole society agrees that they are valuable. We show that for such settings, high diversity is extremely beneficial for the social good as it increases incentives for exploration and breakthroughs occur faster. The literature on incentivizing exploration~\cite{FrazierKKK14,HanKQ15} usually refers to these settings as ``Diamonds in the rough'' and often uses this setting to provide lower bounds.  In contrast, we show diverse societies excel in these settings.
\paragraph{Techniques} To establish our results, we need to be able to describe the optimal behavior of agents and bound the society's average expected utility. As we argue, the optimal behavior of the agents is characterized in \cite{Weitzman79}.  As shown in that paper, agents define an index for each item and then search them in decreasing order of index.  Unfortunately, this index is defined implicitly, making it difficult to explicitly determine the utilities obtained by the agents.  A crucial result that allows us to perform our analyses is that the average utility is approximately equal to the maximum quality of an item that agents will have explored by the end of the time horizon. This result enables us to focus on bounding the maximum quality of explored items. To provide an upper bound on this value, we consider an an alternative simpler scenario where agents, instead of following the optimal policy and computing indices of the inferred distributions of values, they use a plain threshold stopping rule. We show that there exists a coupling of the two processes such that the simpler process always explores more items and then proceed to bound the highest quality achievable when using the simple threshold stopping rule for every agent. We also give a matching lower bound by bounding the number of items that agents with a given diversity level could explore and then bound the expectation of the maximum value that these many normally distributed items may have.

We show that our bounds are asymptotically tight. This is required in order to be able to perform comparisons across different diversity levels as even a constant approximation of the expected average utility would not suffice to show a clear separation as average utilities of different diversity levels can be very close multiplicatively. To prove these sharp bounds, we rely on strong concentration results for Gaussian distributions.





\section{Related Work}
Our work is very related to the line of work in Bayesian social learning.  The classic results of \cite{Banerjee,Bikhchandrietal} show that in social learning with homogeneous payoffs, agents may ignore their private signals in favor of information they infer from previous agents.  When this happens, society herds on sub-optimal actions.  A recent strand of literature, including \cite{Acemoglu11,Acemoglu17,Goereeetal,Lobel15a,Lobel15b}, explore the impact of heterogeneous preferences on social learning, generally observing that heterogeneous preferences help learning when the society is well-connected. The primary difference between these works and our paper is the structure of the actions and information sets.  With the exception of \cite{Goereeetal}, these works assume that the action of each agent is binary and the state of world is binary. On the contrary, in our paper, the action of each agent is a search strategy and the state of world (the qualities of items) is sampled from independent Gaussian distributions. This choice allows us to study the impact of diversity in a specific setting of interest, as motivated by the literature on consumer search initiated by \cite{Weitzman79}.  For the dispersion of information, \cite{Acemoglu11,Lobel15a,Lobel15b} assume agents observe a subset of past actions based on the social network and \cite{Acemoglu17,Goereeetal} assume agents observe the full history or summary statistics of the reviews of past agents where a review is a discrete valued function of an agent’s utility. In our paper, we assume agents observe actions of all previous agents and qualities/values of items explored by all previous agents.

Incentivizing exploration has been studied in the multi-armed bandit setting. In this line of research, it is assumed that myopic agents arrive sequentially and each pulls an arm at a time. As each agent only wants to maximize its own reward, exploration might not be done sufficiently without a principal coordinating exploration. \cite{KremerMP13,MansourSS15} study the setting where the outcomes of the actions are only observed by the principal and the principal convinces myopic agents to explore by sending signals containing partial information of the past outcomes. \cite{FrazierKKK14} considers a different setting where the myopic agents observe all the previous outcomes and the principal is allowed to use money transfers to incentivize exploration. A great amount of follow-up work has been done in various settings, including \cite{CheH15,HanKQ15,MansourSSW16,KannanKMPRVW17,Papanastasiou18}. 

There is an emerging line of work studying ``greedy algorithms" in multi-armed bandits\cite{bastani2017exploiting,kannan2018smoothed,externalities-colt18}. Greedy algorithms can be viewed as a form of social learning and these papers show that near-optimal performance can be achieved under certain diversity conditions.  

\cite{SR17} considers myopic agents with diverse preferences in the multi-armed bandit setting. They show that if subjective preferences are known to agents beforehand, myopic agents will explore all the arms even without a principal coordinating exploration. They consider a setting of qualities and subjective preferences such that the difference between any two qualities is smaller than 1 and the subjective preferences have random binary values. For each arm, this setting guarantees the existence of agents who prefer the arm no matter what quality is. Exploration is therefore guaranteed in the long run.

\cite{AnalytisSGM17} provides simulation results (without theoretical guarantees) showing that certain amount of preference diversity can lead to an increase in the average utility in a model that is closely related to our model. The main difference to our model is that they assume each agent searches items according to popularity levels of items and stops searching once an item has value larger than some threshold. On the contrary, we assume each agent performs the optimal search.


\section{Model}
We consider a setting with infinitely many items\footnote{This assumption allows us to avoid edge cases in our analyses; our results carry over to settings with finitely many items so long as there are sufficiently many of them (in particular, more than the logarithm of the time horizon).} of unknown value. Agents arrive sequentially in $T$ rounds. The agent that arrives in round $t$ has a value $v_{i,t}$ for item $i$ that is the sum of two terms: an objective score $q_i$ for the item and an individual subjective score $s_{i,t}$.
The objective scores (qualities) $q_i$ are the same for all agents and are drawn from a Gaussian $\N(0,1-\sigma^2)$ with mean zero and variance $1-\sigma^2$ for some $\sigma\in[0,1]$. We refer to this score as the quality of item $i$.
The subjective scores $s_{i,t}$ vary among agents and are drawn independently from a Gaussian $\N(0,\sigma^2)$ with mean zero and variance $\sigma^2$. The total value of an item, $v_{i,t}=q_i+s_{i,t}$, is thus distributed according to a Gaussian $\N(0,1)$ with mean zero and variance one.   
\begin{definition}[Diversity Level]
	We define the {\em diversity} of a population of agents to be the parameter $\sigma$ (the standard deviation of the subjective score). 
\end{definition}
\noindent When the diversity is $0$, all the value of an item is derived from its objective score, i.e., $v_{i,t}=q_i$, and so agents' values are perfectly correlated.  When diversity is $1$, all the value of an item is derived from its subjective score, i.e., $v_{i,t}=s_{i,t}$, and so agents' values are independent.  Intermediate diversities interpolate between these two extremes.

Initially, the agent only knows a prior $\mathcal{F}_{i,t}$ over her value for each item $i$, which is a function of the initial Gaussian priors and her observations from past agents' searches. The agent may learn the objective score $q_i$ and subjective score $s_{i,t}$ of an item $i$, and hence its value $v_{i,t}=q_i+s_{i,t}$, by paying a search cost $c$, where $c$ is a fixed constant smaller than $\frac{1}{\sqrt{2\pi}}$ (chosen so that the agents are incentivized to explore at least one item). A search strategy of an agent is a mapping from observations to either a choice from among previously explored items, or an item to explore next.  The utility $u_{t}(\sigma, \{\mathcal{F}_{i,t}\})$ of the agent for a given search strategy and priors is her expected value for the chosen item $i^*$, $v_{i^*,t}$, minus her total expected search cost, i.e., $c$ times the expected number of items she explores. The chosen item $i^*$ has to be an explored item.

The optimal search strategy $\sigma^*$ for an agent in this setting was characterized by \cite{Weitzman79} in a slightly more general setting called the {\em Pandora's problem}.  This strategy assigns a score, called the {\em Weitzman index}, to every item and then explores items with positive indices in decreasing order of their indices until the observed value is greater than the following item's index. 

\begin{definition}[Weitzman Index]
	The {\em Weitzman index} $x_{i,t}^*$ is defined as
$\E_{v_{i,t} \sim \mathcal{F}_{i,t}}[\max(v_{i,t} - x_{i,t}^*,0)] = c$.
\end{definition}

\noindent We assume that all agents use the optimal strategy and explore items in decreasing order of Weitzman index.  We denote by $x^*$ the Weitzman index of an item that has not been explored previously, i.e. the solution to
$\E_{v \sim \N(0,1)}[\max(v - x^*,0)] = c$. The choice of $c < \frac{1}{\sqrt{2\pi}}$ guarantees that $x^*>0$ and thus every agent will explore at least one item.   

Once an item is explored (but not necessarily chosen) by some agent, agents in subsequent rounds obtain information about that item.  There are multiple choices we can make about exactly what information agents learn; the right choice is dictated by the setting and whether agents can infer each others' subjective biases.  In this paper, we will focus on a ``revealed quality'' model, where, after an item $i$ is explored by some agent, its quality $q_i$ is revealed to all agents in future rounds. In this model, agents update their prior for a previously explored item $i$ to $F_{i,t}=\N(q_i,\sigma^2)$ before computing its Weitzman index $x_{i,t}^*$.  We also consider a ``revealed value'' model, where after an item $i$ is explored by some agent in round $t$, only its value to that agent $v_{i,t} = q_i + s_{i,t}$ is revealed to future agents.  In this model, the optimal strategy of the agents becomes more complicated. Given the past values of agents for a given item, the current agent forms a posterior distribution of the item's quality. This posterior can be expressed as a Gaussian distribution with mean and variance that depend on the average past observed values and the number of times that item gets explored. We can show similar results in the ``revealed value'' model. We omit them because of the space limit.



Our goal is to understand how the expected average utility changes depending on the diversity level $\sigma$ of the society.  The impact of diversity will vary with the time spent searching, and so we define the average utility with respect to a time horizon $T$.  Let $\mathcal{H}_t$ be a history indicating the information obtained about the items explored in the first $t-1$ rounds and $\mathcal{F}(\mathcal{H}_t)$ be the induced priors for the agent arriving in round $t$.

\begin{definition}[Average Utility]
We define $A(\sigma, T)=\frac{1}{T}\sum_{t=1}^TE_{\mathcal{H}_t}[u_{t}(\sigma,\mathcal{F}(\mathcal{H}_t))]$ to be the {\em average expected utility} over agents in the first $T$ rounds when the diversity level is $\sigma$. 
\end{definition}


\subsection{Justification of Our Model}
Here we provide the rationale behind the model choices.
 
  \xhdr{Decomposition in Common and Private Values} The assumption that agent values decompose in independent common and private components  is commonly used in the empirical auction literature. Going back to the work of \cite{LiV98} who showed how to perform the decomposition based on deconvolution, it has been shown that it predicts the agent behavior quite accurately in real-world data \cite{Krasnokutskaya11, Asker10}. 
  
  \xhdr{Normally Distributed Values} The choice of normal distributions for the common and private values enables expressing a wide range of diversity levels with a single parameter. The constant sum of variances property allows for fair comparisons as in all cases agents have the same distribution of values for every option. Normal distributions appear in practice; e.g., the distribution of meta-critic scores for movies on Rotten Tomatoes closely fits a normal distribution; and the distribution of the common value in the US offshore oil and gas auctions has been shown to fit a normal distribution using econometrics~\cite{SyrgkanisTZ18}.  
  
  \xhdr{Pandora's Box Search Model} The Weitzman search strategy is the optimal policy that an agent can follow to maximize his expected utility. It has been successfully applied to real-world applications in the marketing literature  \cite{BartKM16,KimAB10,HonkaC17}.

\section{How Different Diversity Levels Compare}

There are two opposing forces that affect average utility -- exploration and exploitation. When the society is not very diverse, agents tend to exploit previously explored options as it is quite costly for them to search for new alternatives that could potentially give them a higher value. On the other hand, diverse societies tend to explore more, but cannot exploit as much since information collected in the past is less valuable to them.

Focusing solely on exploration or exploitation can come at a significant cost.  In the extreme case where the society is not at all diverse (i.e., $\sigma=0$) and focuses on exploitation, we can show the society's average expected utility, $A(0,T)$, can be upper bounded by a constant. In particular, $A(0,T)$ does not depend on the number of rounds $T$.

\begin{lemma}
\label{lem:nodivub}
For any $T$, $A(0,T)\leq\sqrt{ \ln \left( \frac{1}{2\pi c^2}\right)}  + 2$. 
\end{lemma}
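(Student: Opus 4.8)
The plan is to exploit the fact that when $\sigma=0$ every agent's value for an item equals its quality, so all agents face identical and (once explored) fully revealed values. First I would reduce the claim to a bound on the \emph{maximum explored quality}. Let $M_T$ denote the largest quality $q_i$ among all items explored during the first $T$ rounds. Since the chosen item of any agent must have been explored (by her or by a predecessor) and its value equals its quality, every agent's realized value is at most $M_T$; as search costs are nonnegative, $u_t \le M_T$ for every $t$, and averaging gives $A(0,T) \le \E[M_T]$. It therefore suffices to bound $\E[M_T]$ by $\sqrt{\ln(1/(2\pi c^2))} + 2$.

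Next I would analyze the exploration dynamics to control $M_T$. The key observation is that exploration ceases once some explored item has quality exceeding $x^*+c$. Indeed, if the best explored quality is $Q$ with $Q-c>x^*$, then the updated prior of that item is a point mass at $Q$, whose Weitzman index is $Q-c$; this index strictly exceeds the index $x^*$ of every unexplored item and the index of every other explored item. Hence each subsequent agent opens this item first, realizes value $Q>x^*$, and immediately stops, exploring no new item. While no explored quality exceeds $x^*+c$, on the other hand, each agent keeps drawing fresh items (all i.i.d.\ $\N(0,1)$) until she uncovers one of value $>x^*$. Consequently $M_T$ is at most $M_\infty$, the first value exceeding $x^*+c$ in a sequence of independent draws from $\N(0,1)$ conditioned on exceeding $x^*$; this value is distributed as $\N(0,1)$ conditioned on exceeding $x^*+c$, so $\E[M_T] \le \E_{Z\sim\N(0,1)}[\,Z \mid Z>x^*+c\,]$.

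Finally I would make the two estimates quantitative. For the index, integrating by parts gives $c = \E[\max(v-x^*,0)] = \phi(x^*) - x^*(1-\Phi(x^*)) \le \phi(x^*) = \tfrac{1}{\sqrt{2\pi}}e^{-(x^*)^2/2}$, which rearranges to $x^* \le \sqrt{\ln(1/(2\pi c^2))}$. For the truncated expectation, the inverse Mills ratio obeys the uniform bound $\E[Z\mid Z>a] = \phi(a)/(1-\Phi(a)) \le a+1$ for all $a\ge 0$, which I would verify from the equivalent inequality $\phi(a)\le (a+1)(1-\Phi(a))$ using the tail bound $1-\Phi(a)\ge \frac{a}{a^2+1}\phi(a)$. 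Combining, $A(0,T) \le \E[M_T] \le (x^*+c)+1 \le \sqrt{\ln(1/(2\pi c^2))} + c + 1 < \sqrt{\ln(1/(2\pi c^2))} + 2$, the last step using $c<\tfrac{1}{\sqrt{2\pi}}<1$.

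The main obstacle is the second step: rigorously justifying that exploration halts and that $M_T$ is dominated by a single truncated-Gaussian draw. This requires a careful case analysis of Weitzman's rule under the point-mass (revealed-quality, $\sigma=0$) posteriors to pin down exactly when an agent prefers re-opening the best known item to drawing a new one, and then coupling the record qualities across rounds to the i.i.d.\ conditioned-Gaussian sequence. The remaining ingredients—the index estimate and the inverse Mills ratio bound—are routine, and note that the uniform bound $a+1$ (rather than the usual $a+1/a$, which blows up for small $a$) is what is actually needed here, since $x^*+c$ can be close to $0$ when $c$ is near $\tfrac{1}{\sqrt{2\pi}}$; the slack in the additive constant $2$ then comfortably absorbs the remaining lower-order terms.
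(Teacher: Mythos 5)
Your overall strategy is the same as the paper's: bound the fresh-item Weitzman index $x^*$ by a constant, argue that exploration ceases once an item of quality above $x^*+c$ has been revealed, and bound the expected quality of that item (a Gaussian conditioned to exceed $x^*+c$) via the Mills ratio. Your reduction to $\E[M_T]$ and the coupling of record qualities to i.i.d.\ conditioned Gaussians are sound, and your index bound is in fact cleaner than the paper's: since $x^*>0$ (guaranteed by $c<1/\sqrt{2\pi}$), the identity $c=\varphi(x^*)-x^*(1-\Phi(x^*))\le\varphi(x^*)$ immediately gives $x^*\le\sqrt{\ln\left(\frac{1}{2\pi c^2}\right)}$ with no case distinction, whereas the paper's detour through a tail lower bound only yields $x^*\le\max\left(1,\sqrt{\ln\left(\frac{1}{2\pi c^2}\right)}\right)$.

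The one genuine flaw is in your verification of the Mills-ratio step. From the tail bound $1-\Phi(a)\ge\frac{a}{a^2+1}\varphi(a)$ you get exactly $\varphi(a)/(1-\Phi(a))\le a+\frac{1}{a}$, and the chain $(a+1)(1-\Phi(a))\ge\frac{a(a+1)}{a^2+1}\varphi(a)\ge\varphi(a)$ requires $a(a+1)\ge a^2+1$, i.e.\ $a\ge 1$. So your proposed derivation establishes $\varphi(a)/(1-\Phi(a))\le a+1$ only for $a\ge 1$ --- yet, as you yourself point out, the regime that matters is $a=x^*+c<1$ (any $c$ close to $1/\sqrt{2\pi}$ forces $x^*$ close to $0$). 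In other words, the step as written reproves exactly the $a+\frac{1}{a}$ bound you correctly identified as insufficient. The inequality you want is true for all $a>0$, but it needs a stronger hazard-rate estimate; this is precisely why the paper invokes Fact~\ref{fact:gaussianhr} (Baricz), namely $\varphi(t)/(1-\Phi(t))\le\frac{t}{2}+\frac{\sqrt{t^2+4}}{2}\le t+1$, which covers small $t$. Substituting that fact for your tail-bound argument closes the gap, and the rest of your proof then goes through.
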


\noindent To prove Lemma~\ref{lem:nodivub}, we first bound by a constant the Weitzman index $x^*$ of an unsearched item, i.e.\ one whose quality is not yet revealed. We then argue that once some item with sufficient quality has been identified (with quality $q_i > x^*+c$), no other item will be explored in future rounds. We conclude the proof by upper-bounding the expected quality of such an item conditional on its quality being higher than $x^*+c$. This directly gives a bound for the average utility of the agents. The proof of Lemma~\ref{lem:nodivub} is given in Appendix~\ref{sec:omitted-proofs}.

In the other extreme case where agents are fully diverse, the society's average expected utility, $A(1,T)$, is even worse. We show that it is always the case that a fully diverse population obtains lower utility than a non-diverse population and so, in particular, its utility is also bounded by a constant independent of $T$.

\begin{lemma}
\label{lem:nodiv}
For any $T$, $A(1,T) \leq A(0,T)$. 
\end{lemma}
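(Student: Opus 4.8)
The plan is to reduce both extremes to a single benchmark: the optimal expected utility $U_0$ that an agent obtains in a \emph{fresh} instance of the Pandora's problem, i.e.\ one with infinitely many independent items each of value $\sim \N(0,1)$. Note first that in round $t=1$ no item has been explored yet, so the agent arriving first always faces exactly this fresh instance regardless of $\sigma$ (since $v_{i,1}\sim\N(0,1)$ for every $\sigma$). Thus $u_1=U_0$ in both worlds, and the whole argument will amount to comparing how later rounds deviate from $U_0$.

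First I would analyze the fully diverse case $\sigma=1$. Here the quality variance is $1-\sigma^2=0$, so $q_i=0$ deterministically for every item. In the revealed-quality model, exploring item $i$ therefore reveals only $q_i=0$, and the updated prior $\N(q_i,\sigma^2)=\N(0,1)$ coincides exactly with the prior of an unexplored item. Consequently no agent can ever distinguish explored from unexplored items: every item carries the same Weitzman index $x^*$, and every agent faces the same fresh instance as the first one. Hence $u_t(1,\mathcal{F}(\mathcal{H}_t))=U_0$ for all $t$ and all histories, which gives $A(1,T)=U_0$.

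Next I would lower-bound $A(0,T)$ by $U_0$ via a ``ignore the history'' argument. In the non-diverse case the agent in round $t$ has (weakly) more information than in a fresh instance, but one \emph{feasible} strategy is to disregard the revealed qualities entirely and search only among the infinitely many never-explored items, each of which still has quality, and hence value, distributed as $\N(0,1)$ independently of $\mathcal{H}_t$. This fallback yields expected utility exactly $U_0$ conditional on any history, so by optimality $u_t(0,\mathcal{F}(\mathcal{H}_t))\geq U_0$ pointwise in $\mathcal{H}_t$. Taking expectations over $\mathcal{H}_t$ and averaging over $t$ gives $A(0,T)\geq U_0$. Combining the two bounds, $A(1,T)=U_0\leq A(0,T)$, which is the claim.

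The argument is short because it sidesteps any explicit computation of $x^*$ or of $U_0$; the two points that need care are (i) justifying that when $\sigma=1$ the revealed quality conveys \emph{no} information, which is immediate once one observes that a degenerate Gaussian forces $q_i\equiv 0$ and hence the posterior equals the prior, and (ii) invoking the infinitely-many-items assumption to guarantee that fresh $\N(0,1)$ items are always available to the $\sigma=0$ agents, so the fallback strategy is always executable. I expect (ii) to be the only genuinely load-bearing modeling assumption, since without an unbounded supply of unexplored items the non-diverse agents could be forced to re-explore and the pointwise lower bound $u_t(0,\cdot)\geq U_0$ could fail.
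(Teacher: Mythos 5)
Your proof is correct and takes essentially the same route as the paper's: both arguments reduce to showing that $A(1,T)$ equals the first agent's (fresh-instance) expected utility $U_0$, while $A(0,T) \geq U_0$ follows from exhibiting a feasible fallback strategy and invoking the optimality of the agents' actual search. The only difference is the choice of fallback for the $\sigma=0$ bound --- the paper has each later agent imitate the first agent's chosen item (which needs no fresh items, only that the chosen item persists), whereas you have agents ignore history and search among unexplored items (which, as you note, leans on the infinitely-many-items assumption) --- but this is a minor variation, not a different approach.
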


\noindent To prove Lemma~\ref{lem:nodiv}, we observe that the first agent in any society performs the same search no matter the diversity level and so achieves the same expected utility. When the diversity level is 0, the expected utility improves across rounds, while when the diversity level is 1, the expected utility stays the same across rounds, implying the result. The proof of Lemma~\ref{lem:nodiv} is given in Appendix~\ref{sec:omitted-proofs}.

In both of these extreme cases, the average expected utility is bounded by a constant that depends on the search cost $c$ but not on the number of rounds. However, as we show, any intermediate diversity level $\sigma \in (0,1)$ can be significantly better than those extreme cases as the average expected utility in those cases grows with the number of rounds. Thus, it becomes clear that societies with intermediate diversity levels achieve higher average expected utility than those with extreme diversity levels.

Our main result is a characterization of how different diversity levels $\sigma, \sigma' \in (0,1)$ compare for different numbers of rounds $T$. We identify three regions, depending on the number of rounds, that affect which diversity level is better than the other.

\xhdr{When the number of rounds $T$ is small,} the smaller diversity level  $\sigma'$ achieves higher average expected utility. This is because the benefit of broader past explorations is not as significant compared to knowing the values of a smaller subset of items more accurately. We show numerically in Section~\ref{sec:simulations} that this is indeed the case and that initially, for very few rounds, almost no diversity is preferable.

\xhdr{When $T$ takes intermediate values,} the value of exploring a larger number of options becomes more significant than higher accuracy on fewer options and thus the higher diversity level $\sigma$ is preferred. Theorem~\ref{thm:main-intermediate} shows the existence of this region under a technical assumption which requires a small gap between $\sigma'$ and $\sigma$.

\begin{theorem}[Intermediate $T$]\label{thm:main-intermediate}
  Let $\sigma, \sigma' \in (0,1)$ be two diversity levels with $\sigma' < \sigma \cdot e^{ - O(\frac1{1-\sigma^2}) }$. There exists an interval $(T_1,T_2)$ such that $A(\sigma,T) > A(\sigma',T)$ for all $T \in (T_1,T_2)$, i.e. the large diversity level $\sigma$ outperforms the small diversity level $\sigma'$. Here $T_1$ and $T_2$ depend on $\sigma$ and $\sigma'$.
\end{theorem}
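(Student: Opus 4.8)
The plan is to reduce $A(\sigma,T)$ to a sharp deterministic estimate and then compare the estimates for $\sigma$ and $\sigma'$. First I would invoke the crucial reduction discussed above: to leading order $A(\sigma,T)$ equals the expected maximum quality $\E[Q(\sigma,T)]$ of an item explored by round $T$, where $Q(\sigma,T)=\max_{i\text{ explored}}q_i$. Because the two average utilities we must separate are multiplicatively close, I would carry the reduction one order further: an exploiting agent's Weitzman payoff from the revealed items is $Q(\sigma,T)+\beta(\sigma)+o(1)$, where the subjective-index adjustment $\beta(\sigma)$ solves $\E_{s\sim\N(0,\sigma^2)}[\max(s-\beta(\sigma),0)]=c$. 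Thus $A(\sigma,T)\approx\E[Q(\sigma,T)]+\beta(\sigma)$, and the task reduces to (i) a sharp estimate of the number $n(\sigma,T)$ of items explored within $T$ rounds and (ii) its conversion to $\E[Q(\sigma,T)]$ via a maximum-of-Gaussians bound.

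For the count I would establish a two-phase picture and control it on both sides. While no revealed quality is good enough to deter search, essentially every agent opens fresh items until she draws a total value exceeding the outside index $x^*$, so the society accrues items at a $\sigma$-independent rate; once the best revealed quality nears the stopping threshold the process herds and fresh exploration becomes exponentially rare. The upper bound on $n(\sigma,T)$ comes from the stated coupling with a plain threshold-stopping process that provably opens at least as many items; the matching lower bound comes from directly counting the items an agent with diversity $\sigma$ is still willing to open. The key structural fact is that the herding time $T^*(\sigma)$ is increasing in $\sigma$: a more diverse society keeps exploring longer, because revealed qualities are worth less to an agent whose value is dominated by her private component. Both bounds must be pinned down to leading order in the exponent, which is where sharp Gaussian tail estimates (the same ones giving $x^*\approx\sqrt{\ln(1/(2\pi c^2))}$, cf.\ Lemma~\ref{lem:nodivub}) enter.

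Feeding $n(\sigma,T)$ into the concentration of the maximum of $n$ independent $\N(0,1-\sigma^2)$ variables gives $\E[Q(\sigma,T)]=\sqrt{2(1-\sigma^2)\ln n(\sigma,T)}\,(1+o(1))$, and hence a sharp formula for $A(\sigma,T)$. The comparison then reads $A(\sigma,T)-A(\sigma',T)\approx\bigl(\E[Q(\sigma,T)]-\E[Q(\sigma',T)]\bigr)+\bigl(\beta(\sigma)-\beta(\sigma')\bigr)$. The second term is a fixed positive diversity bonus: more diverse societies exploit their private scores more aggressively, so $\beta$ is increasing in $\sigma$. The first term favors the less diverse society — its larger per-item variance $1-\sigma'^2$ yields a higher best quality — but once both have herded this advantage grows only doubly-logarithmically in $T$. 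I would therefore take $T_1$ just past $T^*(\sigma)$, where both societies have herded and the constant bonus $\beta(\sigma)-\beta(\sigma')$ dominates, and $T_2$ the (very large) time at which the slowly growing quality advantage of $\sigma'$ finally overtakes; on $(T_1,T_2)$ the bounds give $A(\sigma,T)>A(\sigma',T)$. The hypothesis $\sigma'<\sigma\,e^{-O(1/(1-\sigma^2))}$ is exactly what makes the bonus $\beta(\sigma)-\beta(\sigma')$ large relative to the herding-time and variance terms, guaranteeing $T_1<T_2$.

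The hard part is \emph{sharpness}. Because both utilities sit near the same constant $x^*$, any constant-factor estimate is useless; I must simultaneously control the leading constant in front of $\sqrt{\ln n}$, the leading term of $\ln n(\sigma,T)$, and the bonus $\beta(\sigma)$, so that the accumulated $o(1)$ errors — from the coupling, from the width of the herding transition (where the per-round exploration probability collapses from $\Theta(1)$ to exponentially small), and from the Gaussian-maximum concentration — are all provably smaller than the genuine difference between the two diversity levels across the entire interval. Making this difference sign-definite on all of $(T_1,T_2)$, rather than at a single $T$, is the step that most relies on the quantitative gap condition together with strong Gaussian concentration.
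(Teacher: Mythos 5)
Your reduction to the maximum explored quality, and your plan to get sharp bounds on the number of explored items (threshold-stopping coupling for the upper bound, ``all current options look bad'' events for the lower bound), is exactly the paper's skeleton (Lemma~\ref{lem:quality}, Lemmas~\ref{lem:interlb} and~\ref{lem:ub}, combining into Proposition~\ref{prop:inter}). But the comparison step, which is the actual content of Theorem~\ref{thm:main-intermediate}, rests on a claim that is false: the ``diversity bonus'' $A(\sigma,T)\approx \E[Q(\sigma,T)]+\beta(\sigma)$, with $\beta(\sigma)$ solving $\E_{s\sim\N(0,\sigma^2)}[\max(s-\beta(\sigma),0)]=c$. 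You are conflating the Weitzman index of the herded item with the payoff it delivers. If the best revealed quality is $Q$, that item's index is indeed $Q+\beta(\sigma)$, but the expected utility an agent extracts from it is, to leading order,
\begin{equation*}
Q+\E[\min(s,\beta(\sigma))] \;=\; Q+\E[s]-\E[\max(s-\beta(\sigma),0)] \;=\; Q-c,
\end{equation*}
by the very definition of $\beta(\sigma)$. So the constant-order correction is $-c$ for \emph{every} $\sigma$: there is no increasing-in-$\sigma$ bonus, and the sign-definite separation you build on it collapses. (There is a genuine $\sigma$-dependent option-value excess over $Q-c$ --- the ability to fall back to fresh items when $s$ is bad --- but it decays to zero as $Q$ grows and would not survive your own $o(1)$ error budget.)

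Relatedly, you have the roles of the two terms backwards in the intermediate window. You assume both societies have ``herded'' by $T_1$ and that the quality term then favors $\sigma'$ (larger variance $1-\sigma'^2$), so that only a constant bonus could rescue $\sigma$. In the paper's proof the quality term itself is the entire effect: by Proposition~\ref{prop:inter}, $A(\sigma,T)=\sqrt{1-\sigma^2}\cdot\sqrt{2\ln(1+\sigma^2\ln T)}\pm O_c(1)$, and this double-logarithmic growth switches on only once $\sigma^2\ln T$ is of order one. Taking $T_2=\exp(1/\sigma'^2)$ keeps $A(\sigma',T)\le C_1+\sqrt{2\ln 2}$ (a constant) throughout the window, while taking $T_1 = \exp\bigl(\sigma^{-2}\exp\bigl((2C_1+C_2)^2/(2(1-\sigma^2))\bigr)\bigr)$ pushes $A(\sigma,T)$ above that same constant: the more diverse society wins because it has explored vastly more items, not despite a quality deficit. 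This also shows that no $o(1)$-sharp control of additive constants is needed --- the $\pm O_c(1)$ slack in Proposition~\ref{prop:inter} suffices --- and it identifies the true role of the hypothesis $\sigma'<\sigma e^{-O(1/(1-\sigma^2))}$: it is precisely the condition $T_1<T_2$, i.e.\ that the growth of $A(\sigma,\cdot)$ activates before $T$ reaches $e^{1/\sigma'^2}$, and has nothing to do with the size of any bonus.
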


\xhdr{When $T$ is very large,} the smaller diversity $\sigma'$ level becomes again preferable to the higher level $\sigma$ as in both cases a comparably large number of options have been explored in which case it is again more beneficial to know the value of the explored items more accurately. This region is described in Theorem~\ref{thm:main-large}.

\begin{theorem}[Very large values of $T$]\label{thm:main-large}
  Let $\sigma, \sigma' \in (0,1)$ be two diversity levels with $\sigma' < \sigma$. There exists an interval $(T_3,\infty)$ such that $A(\sigma',T) > A(\sigma,T)$ for all $T \in (T_3,\infty)$, i.e. the small diversity level $\sigma'$ outperforms the large diversity level $\sigma$. Here $T_3$ depends on $\sigma$ and $\sigma'$.
\end{theorem}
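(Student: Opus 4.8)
The plan is to reduce the comparison of the two average utilities to a comparison of the \emph{maximum quality among explored items}, and then to show that, for large $T$, this quantity is governed almost entirely by the quality variance $1-\sigma^2$. Concretely, I would invoke the approximation relating $A(\sigma,T)$ to the maximum explored quality $Q_{\max}(\sigma,T):=\max\{q_i : i \text{ explored by round } T\}$ (the ``crucial result'' outlined among the techniques), so that up to a lower-order additive error $A(\sigma,T)=Q_{\max}(\sigma,T)\,(1+o(1))$. Since the maximum is taken over the $N_\sigma(T)$ revealed qualities, each distributed as $\N(0,1-\sigma^2)$, standard Gaussian extreme-value estimates give $Q_{\max}(\sigma,T)=\sqrt{(1-\sigma^2)\,2\ln N_\sigma(T)}\,(1+o(1))$, where $N_\sigma(T)$ is the number of distinct items explored by round $T$. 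The whole theorem then rests on controlling the growth of $N_\sigma(T)$ as $T\to\infty$.

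The key structural claim I would establish is that $N_\sigma(T)$ grows only \emph{logarithmically} in $T$, with the $\sigma$-dependence confined to lower-order terms of $\ln N_\sigma(T)$. Using the revealed-quality dynamics, an agent explores a fresh item only if every previously explored item whose Weitzman index exceeds $x^*$ gives her a value below $x^*$. Writing $S$ for this set of items, namely those with quality above the fixed (but $\sigma$-dependent) threshold $x^*-w(\sigma)$, where $w(\sigma)$ solves $\E_{v\sim\N(0,\sigma^2)}[\max(v-w(\sigma),0)]=c$, the per-round probability of any fresh exploration equals $\prod_{i\in S}\Phi\!\left((x^*-q_i)/\sigma\right)$, with $\Phi$ the standard normal CDF. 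Because the threshold $x^*-w(\sigma)$ is a constant independent of $T$, a constant fraction of the explored items lies in $S$, so this is a product of $\Theta(N_\sigma(T))$ factors, each bounded away from $1$; hence the per-round exploration probability decays like $e^{-\Theta(N)}$ once $N$ items have been explored. Turning the heuristic $dN/dt\approx e^{-\Theta(N)}$ into bounds yields $N_\sigma(T)=\Theta(\ln T)$, and therefore $\ln N_\sigma(T)=\ln\ln T+O(1)$, with all of the $\sigma$-dependence absorbed into the $O(1)$ term.

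Combining the two pieces, for both diversity levels we get $A(\sigma,T)=\sqrt{(1-\sigma^2)\,2\ln\ln T}\,(1+o(1))$ and likewise for $\sigma'$. Since $\sigma'<\sigma$ gives $1-\sigma'^2>1-\sigma^2$, the difference $\bigl(\sqrt{1-\sigma'^2}-\sqrt{1-\sigma^2}\bigr)\sqrt{2\ln\ln T}$ is strictly positive and tends to infinity with $T$, while all error terms are of strictly lower order; hence there is a finite $T_3$ beyond which $A(\sigma',T)>A(\sigma,T)$. Note that, unlike Theorem~\ref{thm:main-intermediate}, no gap condition between $\sigma'$ and $\sigma$ is needed here, precisely because the leading term $\sqrt{(1-\sigma^2)\,2\ln\ln T}$ is strictly monotone in $\sigma$, so any separation $\sigma'<\sigma$ eventually manifests. (This also explains, consistently with Lemma~\ref{lem:nodivub} and Lemma~\ref{lem:nodiv}, why the utility grows unboundedly for every fixed $\sigma\in(0,1)$ even though it is only $\Theta(\ln\ln T)^{1/2}$ growth.)

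The step I expect to be the main obstacle is the quantitative control of $N_\sigma(T)$: I must show not merely that exploration slows, but that $\ln N_\sigma(T)/\ln\ln T\to 1$ \emph{uniformly}, with the $\sigma$-dependence pushed entirely into the lower-order term, since only then does the $\sqrt{1-\sigma^2}$ factor decide the comparison. For the upper bound on $N_\sigma(T)$ I would use the coupling with the simpler threshold-stopping process, which explores at least as much; for the matching lower bound I would track the accumulation of moderate-quality items in $S$ directly. Both directions require converting the exponential-decay heuristic for the per-round exploration probability into rigorous high-probability bounds via Gaussian concentration, and then verifying that the accumulated error terms are $o(\sqrt{\ln\ln T})$, so that the slowly growing gap between the two leading terms still dominates for all $T>T_3$.
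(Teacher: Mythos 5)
Your proposal is correct and follows essentially the same route as the paper: the paper proves Theorem~\ref{thm:main-large} as a two-line corollary of Proposition~\ref{prop:inter}, whose proof uses exactly the ingredients you reconstruct --- reducing $A(\sigma,T)$ to the maximum explored quality (Lemma~\ref{lem:quality}), bounding the number of explored items via a coupling with a threshold-stopping process and an $e^{-\Theta(N)}$ exploration probability (Lemmas~\ref{lem:ub} and~\ref{lem:interlb}), and Gaussian extreme-value bounds --- yielding $A(\sigma,T)=\sqrt{1-\sigma^2}\cdot\sqrt{2\ln(1+\sigma^2\ln T)}\pm O_c(1)$. Your final comparison (difference of leading terms diverges, so any $\sigma'<\sigma$ eventually wins with no gap condition) is equivalent to the paper's observation that $A(\sigma,T)/A(\sigma',T)\to\sqrt{1-\sigma^2}/\sqrt{1-\sigma'^2}<1$.
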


Theorems~\ref{thm:main-intermediate} and~\ref{thm:main-large} show that the diversity level $\sigma$ is preferable to a small diversity level $\sigma'$ for intermediate number of rounds but as the number of rounds increases the smaller diversity level becomes more socially beneficial. The proofs of the theorems are presented in Section~\ref{sec:large}.

We note that even though $A(\sigma',T)$ is guaranteed to be larger than $A(\sigma,T)$ for large values of $T$ in the third region, this will happen at extremely large values of $T$. 
In fact, in all simulations presented in the next sub-section, we ran up to $10^6$ rounds for different values of $c$, we did not observe any transition from the second region where $A(\sigma',T) < A(\sigma,T)$ to the third region where $A(\sigma',T) > A(\sigma,T)$. This is because the behavior of the $A(\sigma,T)$ is tightly characterized by the upper-bound we will show in Proposition \ref{prop:inter}, i.e.,
$\sqrt{1-\sigma^2}  \cdot \sqrt{2 \ln(1+\sigma^2\ln( T))}$, which shows that the dependence on $T$ is doubly-logarithmic and thus extremely large values of $T$ are needed to reach the third region. Thus, for all practical purposes, we observe in simulations that the optimal diversity level increases monotonically with the number of rounds.


\subsection{Behavior for small $T$ - Numerical Simulations}\label{sec:simulations}
\begin{figure}[ht]
\centering
\begin{subfigure}{.50\textwidth}
  \centering
  \includegraphics[width=0.9\linewidth]{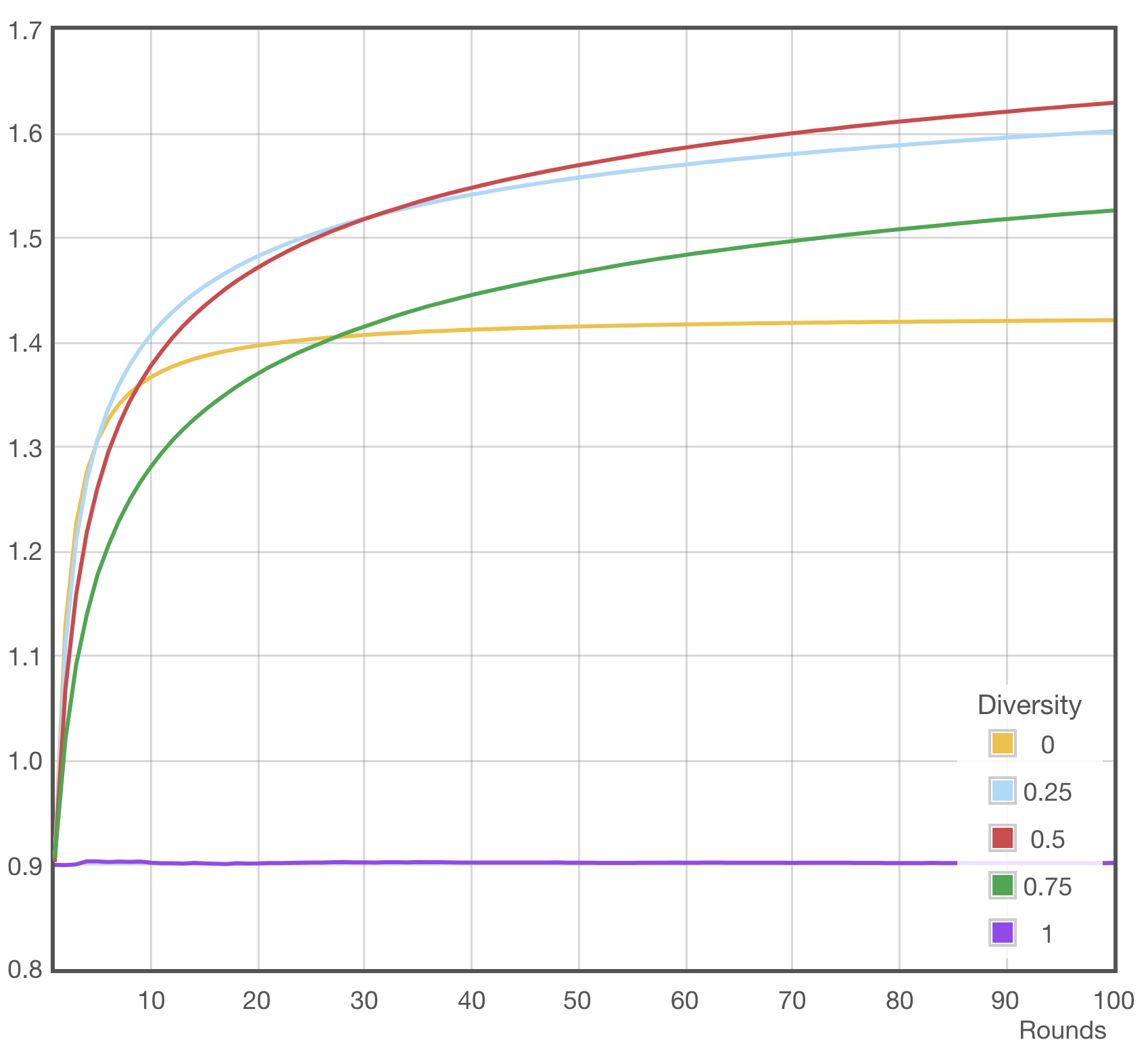}
  \caption{$c=0.1$}
  \label{fig:sub1}
\end{subfigure}%
\begin{subfigure}{.50\textwidth}
  \centering
  \includegraphics[width=0.9\linewidth]{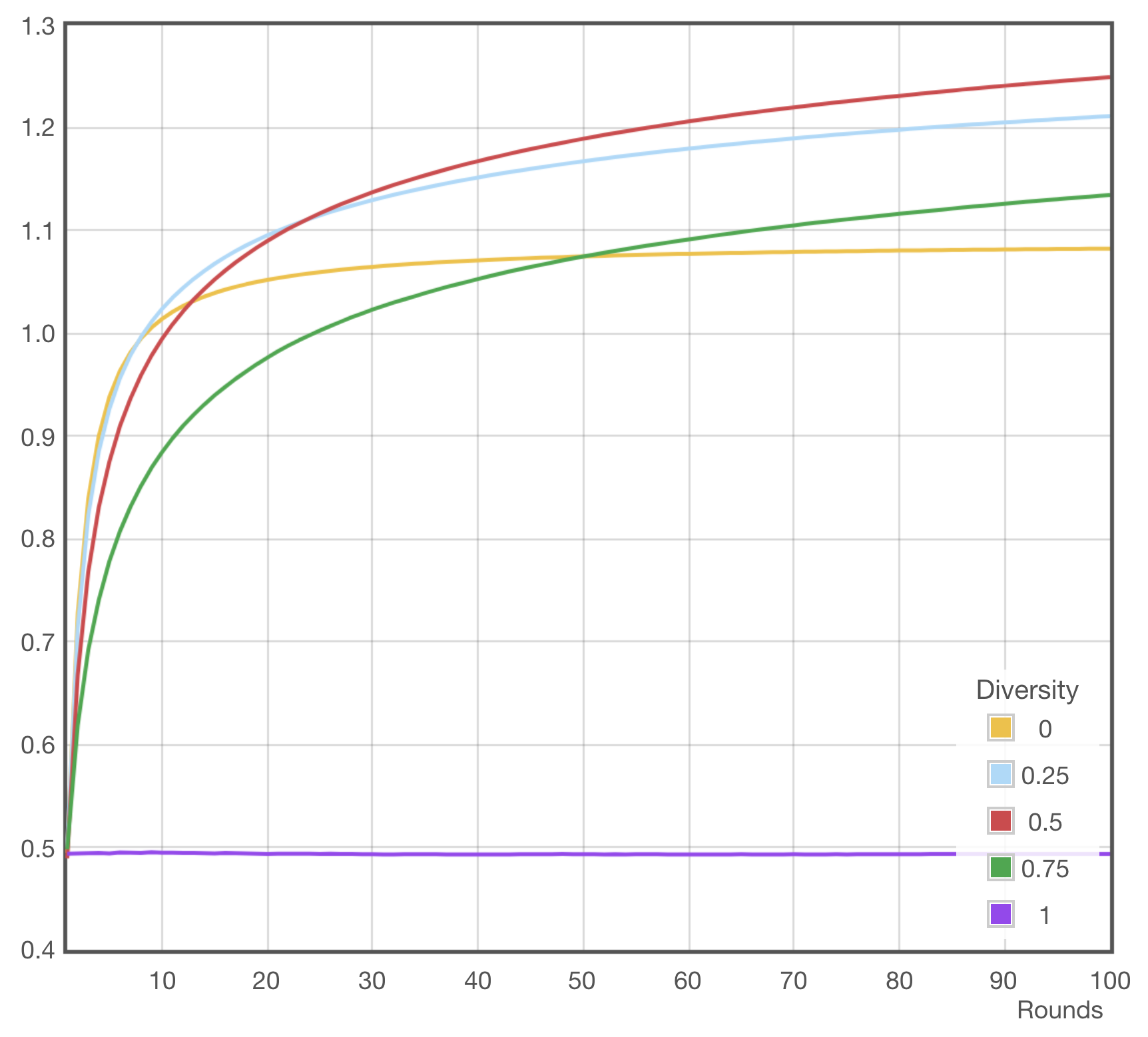}
  \caption{$c=0.2$}
  \label{fig:sub2}
\end{subfigure}
\caption{Simulations of the average expected utility for different number of rounds and diversity levels. 
}
\label{fig:simulation}
\end{figure}

Before we proceed with the proof of Theorems~\ref{thm:main-intermediate} and \ref{thm:main-large}, we present some simulations that show the evolution of society's expected average expected utility with the number of rounds $T$ in the ``revealed quality'' model. Figure~\ref{fig:simulation}, shows the plots $A(\sigma,T)$ for different number of rounds $T$ up to 100 and different diversity levels $\sigma \in \{0, 0.25, 0.5, 0.75, 1\}$. We present two plots, one for $c=0.1$ and one for $c=0.2$. For each of the different scenarios, we run $10^5$ independent runs.

As can be seen from the figure, all different diversity levels start at the same point for $T=1$. The full diverse case $\sigma=1$ remains constant at that value achieving the lowest possible utility than all diversity levels as every agent searches from scratch every time without reusing any information from the past. The case without diversity ($\sigma=0$) on the other hand, initially has an increasing average social utility but as we've shown in Lemma~\ref{lem:nodivub}, approaches a constant value after few rounds.

Diversity levels $\sigma = 0.25, 0.5,$ and $0.75$, achieve lower expected utility than the no-diversity case in the first rounds but eventually all outperform it  as described by Theorem~\ref{thm:main-intermediate}. The lowest diversity level $\sigma=0.25$ manages to do this faster but also gets outperformed by the higher diversity level $\sigma=0.5$ in later rounds. The diversity level $0.75$ also outperforms $0.25$ after $\sim 10^3$ rounds (not shown in the graph) but doesn't seem to outperform $0.5$ even after simulating for $10^6$ rounds. This is an indication that the gap in diversity levels required by Theorem~\ref{thm:main-intermediate} is necessary for statement to hold.

\subsection{Behavior for intermediate and large values of $T$}\label{sec:large}

We now move on to prove Theorems~\ref{thm:main-intermediate} and~\ref{thm:main-large}. Our main technical contribution that drives the proof of those Theorems is a sharp characterization of the average expected utility for any fixed diversity level $\sigma\in(0,1)$. Proposition~\ref{prop:inter} gives a closed form expression for the average expected utility which is tight up to additive constants.

\begin{proposition}
\label{prop:inter}
For any $\sigma\in(0 ,1)$, and any $T > 0$, $$A(\sigma,T) = \sqrt{1-\sigma^2}  \cdot \sqrt{2 \ln(1 + \sigma^2 \ln T)} \pm O_c(1).$$ 
\end{proposition}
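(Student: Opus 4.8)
The plan is to reduce the average utility to a single geometric quantity --- the expected maximum \emph{quality} among all items explored within the first $T$ rounds --- and then estimate that quantity sharply. Write $N(t)$ for the (random) number of distinct items opened by the end of round $t$ and $q_{\max}(t)=\max_{i\text{ opened by round }t} q_i$. I would establish three links: (i) $A(\sigma,T)=\frac1T\sum_{t\le T}\E[q_{\max}(t)]\pm O_c(1)$; (ii) $\E[q_{\max}(t)]=\sqrt{1-\sigma^2}\cdot\sqrt{2\ln N(t)}\pm O(1)$, since the opened qualities form an i.i.d.\ $\N(0,1-\sigma^2)$ sequence whose $n$-th term is independent of the event that at least $n$ items are opened (qualities are revealed only upon opening), so standard Gaussian extreme-value estimates apply at the concentrated value of $N(t)$; and (iii) $N(t)=\Theta_c(\sigma^2\ln t)$, whence $\ln N(t)=\ln(1+\sigma^2\ln t)\pm O_c(1)$ uniformly in $\sigma$. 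Substituting, and using that $\sqrt{1-\sigma^2}\sqrt{2\ln(1+\sigma^2\ln t)}$ varies slowly enough in $t$ that its average over $[1,T]$ equals its value at $T$ up to $O_c(1)$, yields the formula.

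For link (i) I would show that both the per-agent search cost and the gap between an agent's realized value and $q_{\max}(t)$ are $O_c(1)$ in expectation. Once $q_{\max}(t)>x^*$, a typical agent opens only a constant (in expectation) number of the top-quality known items before some realized value $q_i+s_{i,t}$ clears $x^*$; thus its cost is $O_c(1)$ and its chosen value lies between $q_{\max}(t)$ and $q_{\max}(t)+O_c(\sigma)$, while the rare rounds that open a fresh item contribute negligibly. This is the ``average utility $\approx$ maximum explored quality'' principle from the introduction.

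The crux is link (iii), where I would follow the coupling idea. Conditioned on the currently opened qualities, the probability that the round-$t$ agent opens a brand-new item is, up to the Weitzman early-stopping subtleties, $\prod_i \Phi\!\big((x^*-q_i)/\sigma\big)$, the probability that every sufficiently high-index known item fails to deliver value at least $x^*$. Taking logarithms turns this into a sum of i.i.d.\ terms $-\ln\Phi((x^*-q_i)/\sigma)$ over $\Theta(N)$ items, which concentrates around $\kappa(\sigma)\,N$ with $\kappa(\sigma)=\Theta_c(1/\sigma^2)$ uniformly in $\sigma\in(0,1)$. Hence the per-round exploration probability with $N$ known items is $e^{-\Theta_c(N/\sigma^2)}$, the expected wait to open the $(N{+}1)$-st item is $e^{\Theta_c(N/\sigma^2)}$, and summing these geometric waiting times gives $T=e^{\Theta_c(N(T)/\sigma^2)}$, i.e.\ $N(T)=\Theta_c(\sigma^2\ln T)$. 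To make this two-sided I would couple the true Weitzman dynamics, on the same qualities and subjective scores, to two threshold processes: one provably \emph{more} exploratory (opening a new item whenever a slightly smaller set of known items all fail), giving the upper bound on $N(T)$ and hence on $A(\sigma,T)$; and one \emph{less} exploratory (requiring all known items to fail), giving the lower bound. Crucially, both processes keep a product over a constant fraction of the $N$ known items, so the two estimates of $N(T)$ agree up to constant factors.

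I expect the main obstacle to be exactly this two-sided tightness. A crude bound that looks only at the single best known item gives exploration probability $e^{-\Theta(q_{\max}^2/\sigma^2)}$ and the far-too-large estimate $N(T)\approx T^{\sigma^2}$, so both bounding processes must preserve the multiplicative structure over $\Theta(N)$ items to land on the correct $\Theta(\sigma^2\ln T)$ scaling. Proving that $\sigma^2\kappa(\sigma)$ stays bounded away from $0$ and $\infty$ uniformly over $(0,1)$, and controlling the $O(\sqrt N)$ fluctuations of $\sum_i -\ln\Phi((x^*-q_i)/\sigma)$ tightly enough that they perturb $\ln N(T)$ by only $O_c(1)$, is where the strong Gaussian concentration results mentioned in the introduction are needed; links (i) and (ii) are comparatively routine extreme-value and bookkeeping arguments.
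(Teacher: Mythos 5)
Your skeleton matches the paper's at a high level (reduce $A(\sigma,T)$ to the expected maximum explored quality, then tie that to roughly $\sigma^2\ln T$ explored items), but there is a genuine gap at your link (ii), which you dismiss as ``comparatively routine.'' The number of explored items $N(t)$ and the explored qualities are strongly \emph{negatively} correlated: the process opens item $n+1$ precisely because the qualities found so far were disappointing, and a single quality near $x^*+\sigma\sqrt{2\ln T}$ stalls exploration for the entire horizon. The independence you invoke --- that $q_n$ is independent of $\{N(t)\ge n\}$ --- is true but insufficient, since the maximum involves $q_1,\dots,q_{n-1}$, and these are biased downward conditional on $\{N(t)\ge n\}$ (and biased in an uncontrolled way conditional on $\{N(t)=n\}$, which is the intersection of a decreasing and an increasing event in the qualities). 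So you cannot simply evaluate Gaussian extreme-value asymptotics ``at the concentrated value of $N(t)$'': for the lower bound, conditioning on $N(T)$ being large truncates exactly the upper tail of the maximum you are trying to exhibit; for the upper bound, you need joint control of the event that $N(T)$ is large \emph{and} the maximum is large, which a marginal (even high-probability) bound on $N(T)$ does not provide.

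This decoupling is precisely what the paper's two constructions are engineered to achieve, and your outline lacks the corresponding devices. For the upper bound, the paper (Lemma~\ref{lem:ub}) couples to a process in which an agent passes a known item with $q_i\ge x^*+c+1$ iff $s_{i,t}\le -1$ --- a coin whose outcome is \emph{independent of the qualities} --- so the count of explored high-quality items is independent of their values, $\Pr[k\text{ such items explored}]\le T e^{-k/(2\sigma^2)}$, and this can be combined term-by-term with the truncated-Gaussian maximum (Lemma~\ref{lem:boundM}); note the paper never needs to bound the total count $N(T)$ at all. For the lower bound (Lemma~\ref{lem:interlb}), the paper uses events $W_t$ defined purely in terms of subjective scores, so that ``some agent digs down to the best of the first $n$ labels'' and ``the best of the first $n$ labels exceeds $h$'' are two unconditional high-probability events intersected by a union bound, with no conditioning on the qualities ever taking place. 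Your link (iii) --- two-sided concentration of $N(T)$ at $\Theta_c(\sigma^2\ln T)$ --- is plausible and in fact stronger than anything the paper proves, but by itself it does not yield the formula; to close the argument you must either add a quality-independent coupling of the paper's kind for the upper bound, and restate the lower bound as an intersection of unconditional events rather than a statement conditional on $N(T)$.
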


The proof of Proposition \ref{prop:inter} can be found in Appendix \ref{sec:omitted-proofs}. The main observation in establishing the statement is that the expected utility $A(\sigma,T)$ is well approximated by the maximum quality of the revealed items after $T$ rounds (Lemma~\ref{lem:quality}). Using this observation, we obtain the result by giving upper and lower bounds to the best item that is revealed in the end which in turn is closely related to the number of revealed items. To give the lower-bound (Lemma~\ref{lem:interlb}), we bound the rate that new items are revealed by bounding the probability that all available options are unsatisfactory for a given agent, i.e. when his true value is negative for all currently explored items. For the upper-bound (Lemma~\ref{lem:ub}), we show that the probability that a new item is explored is exponentially small with the number of high quality items already known.
Thus as we show, the number of items $n$ explored after $T$ rounds is logarithmic in $T$. We obtain the bound on the maximum quality of a revealed item by noting that the maximum of $n$ gaussian random variables with variance $1-\sigma^2$ is approximately $\sqrt{1-\sigma^2}  \cdot \sqrt{2 \ln n}$. This gives us the approximate expression of the expected utility $A(\sigma,T)$.

Given the closed form expression of Proposition~\ref{prop:inter}, we can easily obtain Theorems~\ref{thm:main-intermediate} and~\ref{thm:main-large} as corollaries.

\begin{prevproof}{Theorem}{thm:main-intermediate}

By Proposition \ref{prop:inter}, there exists a constant $C_1$ such that 
\[
\left| A(\sigma,T) - \sqrt{1-\sigma^2}  \cdot\sqrt{2 \ln(1+\sigma^2\ln( T))} \right| \leq C_1. 
\]

For notation convenience, define constant $C_2 = \sqrt{2 \ln(2)}$. Set $T_1 = \exp\left(\frac{1}{\sigma^2} \exp\left(\frac{(2C_1+C_2)^2}{2(1-\sigma^2)}\right)\right) $ and $T_2 = \exp\left(\frac{1}{\sigma'^2}\right)$.

For any $T < T_2$, we can upper bound $A(\sigma',T)$ by constant $C_1 + C_2$:
\begin{align*}
&A(\sigma',T) \leq  \sqrt{1-\sigma'^2}  \cdot\sqrt{2 \ln(1+\sigma'^2\ln( T))} + C_1  \\
< &\sqrt{2\ln\left(1+\sigma'^2 \cdot \frac{1}{\sigma'^2}\right)} + C_1 \leq C_2 + C_1.
\end{align*}

For any $T > T_1$, we can lower bound $A(\sigma, T)$ by constant $C_1+C_2$: 
\begin{align*}
A(\sigma,T) &\geq  \sqrt{1-\sigma^2}  \cdot\sqrt{2 \ln(1+\sigma^2\ln( T))} - C_1  \\
&> \sqrt{1-\sigma^2}  \cdot\sqrt{2 \ln\left(1+\sigma^2  \cdot \frac{1}{\sigma^2} \exp\left(\frac{(2C_1+C_2)^2}{2(1-\sigma^2)}\right)  \right)}  -C_1\\
&> 2C_1 + C_2 - C_1 = C_1 +C_2.
\end{align*}

Therefore for any $T \in (T_1,T_2)$, $A(\sigma,T) > A(\sigma',T)$. In order to make sure $T_1<T_2$, we need $\sigma' < \sigma \cdot \exp\left(-\frac{(2C_1+C_2)^2}{4(1-\sigma^2)}\right)$. 
\end{prevproof}

\begin{prevproof}{Theorem}{thm:main-large}

For any $\sigma' < \sigma$, we have that $\lim_{T \rightarrow \infty} \frac {A(\sigma,T)} {A(\sigma',T)} = \frac{\sqrt{1-\sigma^2}}{\sqrt{1-\sigma'^2}} < 1$.
This immediately implies that for any $\sigma' < \sigma$, there exists a large enough $T_3$ such that $A(\sigma,T) < A(\sigma',T)$ for all $T > T_3$, as required by Theorem~\ref{thm:main-large}.

\end{prevproof}

\section{An Extension with Rare but High Reward Items}\label{sec:diamond}

We extend our results to a setting where agents' values are correlated in a more structured way. In this extension, we introduce an additional term of objective value in the quality of an item, i.e. we write $q_i$ as $d_i + r_i$ where $d_i$ takes value $D$ with probability $p = \Theta(D^{-1})$ and 0 otherwise while $r_i$ is sampled from $\N(0,1-\sigma^2)$ as before. We call an item with $d_i = D$ a ``diamond'' due to its rarity and high value. We assume that $c \ge p\cdot D$ so that the exploration cost is larger than the expected value of an item.

In this diamond model, diversity gives significant help in early rounds. Our main result in this model is that the existence of a diamond can significantly change the behavior and average utilities of agents of different diversity levels.

\begin{theorem}\label{thm:diamond}
  Consider any constant diversity level $\sigma < 1$. If the number of rounds $T$ satisfies $\omega(D) \leq T \leq  \exp(o(D))$, then $A(\sigma,T) = o(1) \cdot A(1,T)$ as the diamond value $D$ increases.
\end{theorem}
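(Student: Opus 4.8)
The plan is to sandwich the two quantities: show $A(1,T)=\Omega(D)$ while $A(\sigma,T)=o(D)$ for every constant $\sigma<1$, and then divide. The organizing observation is that a never-explored item has value $v=r+s$ (or $D+r+s$ for a diamond), and since $r+s\sim\N(0,1)$ for a non-diamond \emph{regardless} of $\sigma$, the value of an unexplored item is the same mixture $(1-p)\N(0,1)+p\,\N(D,1)$ for all diversity levels. Hence its Weitzman index, call it $x_1$, is a single constant independent of $\sigma$, and the assumption $c\ge pD$ is what keeps $x_1$ bounded as $D\to\infty$, since it caps the option value contributed by the rare diamond.

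\emph{Lower bound on $A(1,T)$.} For $\sigma=1$ an explored non-diamond item has known quality $0$ but a fresh subjective score, so it is never preferable to the higher-index unexplored items; each agent therefore keeps opening fresh items, stopping only when a realized value exceeds $x_1$ or a diamond is uncovered. The per-item stopping probability is $\bar\Phi(x_1)+p=\Theta(1)$, so $\Theta(1)$ new items are revealed per round and the first diamond surfaces after $\Theta(1/p)=\Theta(D)$ revealed items, i.e.\ after $\Theta(D)$ rounds in expectation. Since $T=\omega(D)$, Markov's inequality gives that a diamond is found by round $T/2$ with probability $1-o(1)$; from then on every agent exploits it for expected utility $D-c$, while the remaining rounds contribute nonnegatively. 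Averaging over the last $T/2$ rounds yields $A(1,T)\ge (1-o(1))(D-c)/2=\Omega(D)$.

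\emph{Upper bound on $A(\sigma,T)$ for $\sigma<1$.} I would couple the true process with a ``no-diamond'' process that uses the same Gaussian draws $r_i,s_{i,t}$ but pretends every $d_i=0$; the two coincide until the first diamond is explored. In the no-diamond process a diverse-but-not-fully-diverse society exploits revealed qualities exactly as in the base model, so the upper-bound machinery behind Lemma~\ref{lem:ub} applies (with the hidden constant enlarged to reflect the bounded index $x_1$) and shows that the number of revealed items satisfies $\E[N_T^0]=O(\ln T)$. Because the diamond indicators $\{d_i\}$ are independent of the Gaussian draws, the position $M$ of the first diamond in the exploration order is $\mathrm{Geometric}(p)$ and independent of $N_T^0$; hence $\Pr[\text{a diamond is found by round }T]=\Pr[M\le N_T^0]\le p\,\E[N_T^0]=O(p\ln T)=o(1)$, using $p=\Theta(1/D)$ and $T\le\exp(o(D))$ so that $\ln T=o(D)$. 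Splitting $\E[u_t]$ according to whether a diamond has been revealed, on the no-diamond event $u_t$ equals the base-model utility, which by Lemma~\ref{lem:quality} and Proposition~\ref{prop:inter} averages to $O(\sqrt{\ln\ln T})$, while on the (probability $o(1)$) diamond event every item value is at most $D+O(\sqrt{\ln\ln T})$. Therefore $A(\sigma,T)\le O(\sqrt{\ln\ln T})+o(1)\cdot\big(D+O(\sqrt{\ln\ln T})\big)=o(D)$, since $\sqrt{\ln\ln T}=o(D)$ in the stated range.

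Dividing the two bounds gives $A(\sigma,T)/A(1,T)=o(D)/\Omega(D)=o(1)$, which is the claim. The \textbf{main obstacle} is the exploration bound $\E[N_T^0]=O(\ln T)$: it is exactly here that $\sigma<1$ (so that past qualities are informative and agents herd onto the best revealed item) and the assumption $c\ge pD$ (so that the diamond's option value does not inflate $x_1$, and hence the stopping threshold, with $D$) are both indispensable. Without the latter the index $x_1$ would grow with $D$, agents would chase the diamond directly and reveal $\Theta(D)$ items per round, and the separation between $A(\sigma,T)$ and $A(1,T)$ would collapse; so the delicate point is to quantify the gap in $c\ge pD$ and confirm it forces logarithmic exploration for all constant $\sigma<1$.
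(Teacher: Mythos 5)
Your proposal is correct and takes essentially the same approach as the paper: the lower bound $A(1,T)=\Omega(D)$ comes from the observation that fully diverse agents must open at least one fresh item per round until a diamond appears (the paper's Lemma~\ref{lem:diamonddiv}), and the upper bound $A(\sigma,T)=o(D)$ comes from the logarithmic exploration bound behind Lemma~\ref{lem:ub}, so that only $O(\ln T)$ items are ever opened and a diamond is therefore discovered with probability $O(p\ln T)=o(1)$ (the paper's Lemma~\ref{lem:dmub} packages this same estimate as a $pD\cdot O\bigl(\sigma^2(\ln T+\ln D)\bigr)$ term in the bound on the expected maximum explored quality). Your coupling-to-a-no-diamond-process and event-splitting presentation, versus the paper's direct bound on $M(\sigma,T)$, is a cosmetic rather than substantive difference.
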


In contrast to the previous section, where exploration in the full diverse setting is meaningless as future agents cannot benefit from the experience of their predecessors, Theorem~\ref{thm:diamond} shows that if there is even a tiny probability that a very desirable and socially beneficial alternative is identified, diversity helps significantly. Therefore, we show that in societies where even fully diverse agents can agree on a clearly superior outcome which might occur with very small probability, larger diversity levels work in society's favor as they help identify this option faster.

The full proof of Theorem \ref{thm:diamond} can be found in Appendix \ref{sec:omitted-proofs}. Here we show some important steps of the proof.

To prove Theorem~\ref{thm:diamond}, we first show that when the diversity level is 1 and $T = \omega(1/p)$, the expected average utility is close to the diamond value $D$. 

\begin{lemma}
\label{lem:diamonddiv}
$A(1,T)$ is at least $(1-o(1)) D$ for $T = \omega(1/p)$. 
\end{lemma}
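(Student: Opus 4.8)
The plan is to show that when $T=\omega(1/p)$ a diamond is discovered in only a vanishing fraction of the rounds, after which almost every agent can secure expected utility close to $D$. Recall that with $\sigma=1$ the qualities are exactly $d_i\in\{0,D\}$, and that once a diamond is explored by anyone its quality $D$ is revealed to all later agents. I would first record two facts about a single agent's optimal expected utility $u_t$. First, $u_t\ge 0$ always: an agent can always run Weitzman's stopping rule on fresh items, stopping at the first observed value exceeding $x^*$, which by the defining equation of $x^*$ yields nonnegative (indeed positive) expected payoff. Second, on the event $A_t$ that some item of quality $D$ has already been revealed before round $t$, the agent may explore exactly that item and take it, obtaining $D+s-c$ with $s\sim\N(0,1)$ independent of the history; since the optimal utility dominates this particular strategy, $u_t\ge D-c$ on $A_t$.

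Next I would bound how fast a diamond appears. The key structural observation is that a fresh (never-explored) item has strictly larger Weitzman index than any item already revealed to have quality $0$, because the small diamond probability raises the fresh index above $x^*$ (for large $D$ we have $\E[\max(D+s-x^*,0)]>c$). Hence, as long as no diamond has been found, every agent's first exploration is of a previously-unexplored item. Letting $I_1,I_2,\dots$ be the sequence of these first-explored items across rounds, they are distinct, and since the choice of which fresh item to open is made before its value—and hence its quality—is observed, while all unexplored items are a priori exchangeable, the indicators $\mathbf 1[d_{I_j}=D]$ are i.i.d.\ $\mathrm{Bernoulli}(p)$. Consequently the probability that no diamond is revealed in the first $k$ rounds is at most $\Pr[d_{I_1}=\dots=d_{I_k}=0]=(1-p)^k\le e^{-pk}$, which gives $\Pr[A_t]\ge 1-e^{-p(t-1)}$.

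Finally I would combine the pieces. Using $u_t\ge 0$ everywhere and $u_t\ge D-c$ on $A_t$,
\[
A(1,T)=\frac1T\sum_{t=1}^T\E[u_t]\ge\frac{D-c}{T}\sum_{t=1}^T\Pr[A_t]\ge(D-c)\left(1-\frac{1}{T\,(1-e^{-p})}\right).
\]
Since $1-e^{-p}\ge p/2$ for small $p$, the correction term is at most $2/(Tp)=o(1)$ precisely because $T=\omega(1/p)$; as $c$ is a fixed constant and $D\to\infty$, this yields $A(1,T)\ge(D-c)(1-o(1))=(1-o(1))D$, as claimed.

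The step I expect to require the most care is the independence claim in the second paragraph: one must verify that conditioning on ``no diamond so far'' does not bias the quality of the next freshly-opened item. This is what makes the geometric bound $(1-p)^k$ valid, and it is ensured because the identity of the item an agent opens first is fixed before its value is observed, so each new first-exploration is a genuine fresh $\mathrm{Bernoulli}(p)$ trial regardless of the all-zero qualities seen previously.
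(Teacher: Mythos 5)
Your proof is correct and follows essentially the same approach as the paper: fresh items have strictly higher Weitzman index than revealed quality-$0$ items, so a new item is explored each round until a diamond appears (giving the geometric $1-(1-p)^t$ discovery bound), after which every agent secures utility at least $D-c$. The only difference is cosmetic---you sum $\Pr[A_t]$ over all rounds via a geometric series, whereas the paper picks a single cutoff round $t=\sqrt{T/p}$ and bounds $A(1,T)\geq(1-(1-p)^t)(1-t/T)(D-c)$.
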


The main idea behind the proof of Lemma \ref{lem:diamonddiv} is to show that at least one new item is searched in each round before a diamond is discovered and thus in $T = \omega(1/p)$, the diamond will be found. The proof of Lemma \ref{lem:diamonddiv} can be found in Appendix \ref{sec:omitted-proofs}.

Next we proceed to bound the expected average utility for any diversity level $\sigma < 1$, when $ T \leq  \exp(o(D))$, $A(\sigma,T)$. 
We start with the extreme case when $\sigma = 0$. The proof is similar to the proof of Lemma \ref{lem:nodivub} and is given in Appendix~\ref{sec:omitted-proofs}.
\begin{lemma}
\label{lem:0divdm}
$A(0,T)$ is at most $D \cdot p +O(1)$.
\end{lemma}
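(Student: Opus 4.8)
The plan is to mirror the proof of Lemma~\ref{lem:nodivub}, carefully tracking the extra contribution coming from diamonds. Since $\sigma=0$ the subjective scores vanish, so every value equals the corresponding quality $q_i = d_i + r_i$, and once an item is explored its value is known exactly to all future agents. Hence the utility of agent $t$ is at most the value she selects, which is at most $Q_T := \max_{i \text{ explored by round } T} q_i$, the highest quality revealed over the whole horizon. Averaging (and using that the value available to agent $t$ is at most $Q_T$) gives the clean reduction $A(0,T) \le \E[Q_T]$, so it suffices to bound $\E[Q_T]$.

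First I would bound the Weitzman index $x^*$ of a fresh (unexplored) item by a constant. From $\E_{v}[(v - x^*)^+] = c$ with $v = d+r$, the diamond part contributes $p\,\E_r[(D + r - x^*)^+] \approx p(D - x^*) \le pD \le c$, so essentially the entire budget $c$ is consumed by the diamond term and the remaining Gaussian part $\E_r[(r-x^*)^+]$ must be small; since $c$ is a fixed constant and the budget left after the diamond term is bounded below, this pins $x^*$ to a constant $B=B(c)$, exactly as in Lemma~\ref{lem:nodivub} (the estimate $\E_r[(r-x)^+]\approx \phi(x)/x^2$ together with $c<\tfrac1{\sqrt{2\pi}}$ fixes the constant). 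In particular the probability $1-\Phi(x^*+c)$ that a regular $\N(0,1)$ quality exceeds the stopping threshold is a positive constant.

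Next I would record the stopping behavior: by the Weitzman rule an agent explores a new item only while the best revealed value is below $x^*$, so as soon as some revealed item has quality $>x^*+c$ exploration halts forever, and a diamond (quality $D+r \gg x^*+c$) always triggers this halt. I would then split $\E[Q_T]$ according to whether a diamond is ever revealed. On the event that no diamond is revealed, every revealed item is a regular $\N(0,1)$ quality and the largest one is the first to cross $x^*+c$; its conditional expectation is $\E[q \mid q > x^*+c] \le (x^*+c) + \tfrac{c}{1-\Phi(x^*+c)} = O_c(1)$, the truncated-Gaussian bound used in Lemma~\ref{lem:nodivub}. On the event that a diamond is revealed, $Q_T \le D + O(1)$, but this event is rare: a diamond is revealed only if the \emph{first} item to cross $x^*+c$ is a diamond rather than a high regular item, which by the independence of each item's type happens with probability at most $\frac{p}{p+(1-p)(1-\Phi(x^*+c))} = O(p)$, since $1-\Phi(x^*+c)=\Theta(1)$. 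Hence the diamond event contributes at most $(D+O(1))\cdot O(p) = O(Dp)$ to $\E[Q_T]$.

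Combining the two cases gives $\E[Q_T] \le O(Dp) + O_c(1) = Dp + O(1)$, using $Dp=\Theta(1)$, and therefore $A(0,T)\le Dp+O(1)$. The main obstacle is the diamond term: a single revealed diamond is worth $\approx D$ and is selected by all (up to $T$) subsequent agents, so one must show that revealing a diamond is genuinely unlikely. The key enabler is the constant bound $x^*=O_c(1)$ from the second step: it keeps the per-exploration probability $1-\Phi(x^*+c)$ of stopping on a \emph{good regular item} bounded below by a constant, so good regular items ``crowd out'' diamonds and drive the probability of ever revealing one down to $O(p)$, exactly offsetting the diamond's value $D$ to leave an $O(Dp)=O(1)$ contribution.
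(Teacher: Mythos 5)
Your proof is correct and takes essentially the same approach as the paper's: bound the Weitzman index $x^*$ of a fresh item by a constant using that the budget remaining after the diamond contribution, $c - pD$, is bounded below (note that your earlier phrase that ``essentially the entire budget $c$ is consumed by the diamond term'' contradicts this and should be dropped — if it were literally true, $x^*$ would be unbounded, $1-\Phi(x^*+c)$ would vanish, and your crowding-out argument would fail), then observe that exploration halts permanently once an item of quality above $x^*+c$ is revealed, and finally bound the expected quality of that first crossing item by splitting on whether it is a diamond, getting $O(pD)$ from the diamond case and $O_c(1)$ from the truncated-Gaussian case. If anything, your explicit bound $p/\left(p+(1-p)(1-\Phi(x^*+c))\right)$ on the conditional probability that the crossing item is a diamond is slightly more careful than the corresponding step in the paper, which writes this contribution directly as $D\cdot p$ inside the conditional expectation (harmless there, since all constant factors are absorbed because $pD=\Theta(1)$).
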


The case for intermediate diversity level $\sigma \in (0,1)$ is more complicated. We give an upper bound on the average utility in the appendix (Lemma \ref{lem:dmub}).
\appendix
%


\section{Tools for Gaussian Distributions}

In this section, we list several facts and lemmas about Gaussian distributions. Throughout this document, we use $\Phi$ to refer to the $CDF$ of the Gaussian $\N(0,1)$ and $\varphi$ to refer to the density function of the Gassian $\N(0,1)$.

\begin{fact}
\label{fact:gaussiancdf}
Let $\Phi(t)$ be the CDF of standard Gaussian distribution then for $t> 0$,
\[
\frac{1}{\sqrt{2\pi}}\cdot \frac{t}{t^2+1}\cdot\exp(-t^2/2)\le 1-\Phi(t) \le \min\{ \frac{1}{\sqrt{2\pi} t}, 1\} \exp(-t^2/2).
\]
\end{fact}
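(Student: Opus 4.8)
The plan is to treat this as the classical Mills-ratio estimate for the Gaussian tail $1-\Phi(t)=\int_t^\infty \varphi(x)\,dx$, and to establish the upper and lower bounds separately. Both directions reduce to elementary comparisons between the tail integral and the density $\varphi$, exploiting monotonicity together with the boundary behavior as $t\to\infty$, rather than any explicit evaluation of $\Phi$.

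For the upper bound I would prove each of the two candidates appearing in the $\min$. First, since $x/t\ge 1$ whenever $x\ge t>0$, and since $\frac{d}{dx}\varphi(x)=-x\,\varphi(x)$ gives $\int_t^\infty x\,\varphi(x)\,dx=\varphi(t)$, I can write $1-\Phi(t)=\int_t^\infty \varphi(x)\,dx \le \tfrac1t\int_t^\infty x\,\varphi(x)\,dx = \tfrac{1}{t}\varphi(t)=\tfrac{1}{\sqrt{2\pi}\,t}e^{-t^2/2}$. Second, to obtain the bound $e^{-t^2/2}$ (which is the value of the $\min$ in the regime $t<1/\sqrt{2\pi}$), I would set $g(t)=e^{-t^2/2}-(1-\Phi(t))$ and compute $g'(t)=e^{-t^2/2}\bigl(\tfrac{1}{\sqrt{2\pi}}-t\bigr)$, so that $g$ increases on $[0,1/\sqrt{2\pi}]$ and decreases thereafter; since $g(0)=\tfrac12>0$ and $g(t)\to 0$ as $t\to\infty$, it follows that $g(t)\ge 0$ on all of $[0,\infty)$, i.e.\ $1-\Phi(t)\le e^{-t^2/2}$. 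Taking the smaller of the two bounds yields the stated right-hand inequality.

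For the lower bound I would introduce the auxiliary function $h(t)=\bigl(1-\Phi(t)\bigr)-\tfrac{t}{t^2+1}\varphi(t)$ and differentiate. Using $\varphi'(t)=-t\,\varphi(t)$ and $\frac{d}{dt}\tfrac{t}{t^2+1}=\tfrac{1-t^2}{(t^2+1)^2}$, the derivative is
\[
h'(t)=-\varphi(t)\left[\,1+\frac{1-2t^2-t^4}{(t^2+1)^2}\,\right]=-\frac{2}{(t^2+1)^2}\,\varphi(t)<0,
\]
where the bracket collapses to a constant because $(t^2+1)^2=t^4+2t^2+1$ makes the numerator equal to $2$. Hence $h$ is strictly decreasing, and since $h(t)\to 0$ as $t\to\infty$ we conclude $h(t)\ge \lim_{s\to\infty}h(s)=0$, which is exactly the claimed left-hand inequality.

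The only delicate point is the algebraic simplification in the lower bound: the entire argument hinges on the bracketed factor in $h'(t)$ reducing to a constant of fixed sign, so this computation must be carried out carefully. Everything else is a matter of monotonicity and evaluating limits at the boundary, so I expect no further obstacles.
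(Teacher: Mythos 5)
Your proof is correct: the integral comparison $\varphi(x)\le \frac{x}{t}\varphi(x)$ for $x\ge t$ gives the $\frac{1}{\sqrt{2\pi}\,t}e^{-t^2/2}$ bound, the monotonicity argument for $g(t)=e^{-t^2/2}-(1-\Phi(t))$ gives the bound by $e^{-t^2/2}$, and the algebra in the lower bound does collapse as claimed, since $(t^2+1)^2+(1-2t^2-t^4)=2$, so $h'(t)=-\frac{2}{(t^2+1)^2}\varphi(t)<0$ and $h\ge\lim_{s\to\infty}h(s)=0$. The paper itself states this fact without proof, treating it as the classical Mills-ratio bound for the Gaussian tail, so there is no in-paper argument to compare against; your derivation is the standard self-contained one and fills that gap correctly.
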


\begin{fact}[\cite{Baricz08}]
\label{fact:gaussianhr}
Let $\Phi(t)$ be the CDF of standard Gaussian distribution. Let $\varphi(t)$ be the density of standard Gaussian distribution. For any $t >0 $, 
\[
\frac{\varphi(t)}{1-\Phi(t)} \leq \frac{t}{2} + \frac{\sqrt{t^2+4}}{2}. 
\]
\end{fact}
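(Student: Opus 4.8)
The plan is to recognize Fact~\ref{fact:gaussianhr} as a lower bound on the Mills ratio in disguise. Writing $R(t) := \frac{1-\Phi(t)}{\varphi(t)}$ for the Mills ratio, the quantity on the left of the claim is exactly $1/R(t)$, while rationalizing the right-hand side gives $\frac{t+\sqrt{t^2+4}}{2} = \Bigl(\frac{\sqrt{t^2+4}-t}{2}\Bigr)^{-1}$. Hence, setting $g(t) := \frac{\sqrt{t^2+4}-t}{2}$, the claim is equivalent to the Mills-ratio lower bound $R(t) \geq g(t)$ for all $t>0$, and I would prove the statement in this form.

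The key tool is the Riccati-type identity satisfied by $R$. Since $\varphi'(t) = -t\,\varphi(t)$, a direct differentiation of $R = (1-\Phi)/\varphi$ gives $R'(t) = t\,R(t) - 1$. I would then record two elementary algebraic facts about the candidate bound $g$. From $2g(t)+t = \sqrt{t^2+4}$ one gets $g(t)^2 + t\,g(t) = 1$, equivalently $t\,g(t)-1 = -g(t)^2$; and differentiating this relation yields $g'(t) = -g(t)/\sqrt{t^2+4}$. Using these, I would check that $g$ is a supersolution of the same ODE, i.e.\ $g'(t) \geq t\,g(t)-1$: this inequality reads $-g(t)/\sqrt{t^2+4} \geq -g(t)^2$, which, after dividing by $g(t)>0$, reduces to $g(t) \geq 1/\sqrt{t^2+4}$, and this in turn is equivalent to the obvious inequality $\sqrt{t^2+4}\geq t$.

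The comparison step then finishes the argument. Let $u(t) := g(t) - R(t)$. Subtracting the two ODE relations gives
\[
u'(t) = g'(t) - R'(t) \geq \bigl(t\,g(t)-1\bigr) - \bigl(t\,R(t)-1\bigr) = t\,u(t),
\]
so $\bigl(e^{-t^2/2}\,u(t)\bigr)' = e^{-t^2/2}\bigl(u'(t)-t\,u(t)\bigr) \geq 0$, meaning $e^{-t^2/2}u(t)$ is nondecreasing on $(0,\infty)$. The boundary condition comes from the asymptotics: both $R(t)\sim 1/t$ (from the standard tail expansion of $1-\Phi$, e.g.\ Fact~\ref{fact:gaussiancdf}) and $g(t)\sim 1/t$ as $t\to\infty$, so $u(t)\to 0$ and therefore $e^{-t^2/2}u(t)\to 0$. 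A nondecreasing function whose limit at $+\infty$ is $0$ must be nonpositive everywhere, so $u(t)\leq 0$, i.e.\ $R(t)\geq g(t)$. Undoing the reformulation gives $\frac{\varphi(t)}{1-\Phi(t)} = 1/R(t) \leq 1/g(t) = \frac{t+\sqrt{t^2+4}}{2}$, as claimed.

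I expect the only delicate points to be bookkeeping rather than depth: getting the direction of the monotone-comparison right (verifying $g$ is a \emph{super}solution, not a subsolution, so that $e^{-t^2/2}u$ increases to $0$ from below) and confirming the $1/t$ asymptotics of both $R$ and $g$ so that the boundary value at $+\infty$ is genuinely $0$. Once those are pinned down, every remaining step is elementary algebra. Alternatively, since this is a known sharp bound, one may simply invoke \cite{Baricz08}.
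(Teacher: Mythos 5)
Your proof is correct, and it fills a gap the paper deliberately leaves open: the paper offers no argument for Fact~\ref{fact:gaussianhr} at all, importing it as a black-box citation to \cite{Baricz08}, so there is no internal proof to match step by step. What you do differently is supply a self-contained derivation, via the classical ODE-comparison argument for Mills-ratio bounds: rewrite the claim as $R(t) \geq g(t)$ where $R(t) = (1-\Phi(t))/\varphi(t)$ and $g(t) = \tfrac{1}{2}(\sqrt{t^2+4}-t)$ (the rationalization $\tfrac{t+\sqrt{t^2+4}}{2}\cdot\tfrac{\sqrt{t^2+4}-t}{2}=1$ is right); use the exact Riccati identity $R'(t) = tR(t)-1$; check that $g$ is a supersolution, where your reduction of $g'(t) \geq tg(t)-1$ to $g(t) \geq 1/\sqrt{t^2+4}$ and thence to $\sqrt{t^2+4}\geq t$ is valid since $g(t) = 2/(\sqrt{t^2+4}+t)$; and conclude via the integrating factor that $e^{-t^2/2}(g-R)$ is nondecreasing with limit $0$ at $+\infty$, hence nonpositive. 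The boundary step is sound because both $g(t)$ and $R(t)$ behave like $1/t$, the latter following from the two-sided tail bound already stated as Fact~\ref{fact:gaussiancdf}, and the algebraic identities you use ($g^2+tg=1$ from squaring $2g+t=\sqrt{t^2+4}$, and $g'=-g/\sqrt{t^2+4}$ from differentiating it) all check out. As for what each approach buys: the paper's bare citation is economical and leans on a known sharp result, while your argument makes the fact self-contained using only calculus plus a tail estimate the paper already contains, and the supersolution-comparison technique generalizes readily to other hazard-rate inequalities of this type.
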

The following two lemmas can be proved by standard techniques about Gaussian distributions. We omit their proofs because of space limit.
\begin{lemma}
\label{lem:gaussian}
For any $T > 0$, let $x_1, ...,x_T$ be sampled from $\N(0,\sigma^2)$ independently. Then we have
\[
Pr\left[ \min_{i \in [T]} \frac{x_1 + \cdots + x_i}{i} \leq -2\sigma \right] \leq 1/4. 
\]
\end{lemma}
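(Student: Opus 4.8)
The plan is to reduce to the standard-normal case and then dispatch the statement with a union bound over the $T$ partial averages, controlling each term with the Gaussian tail estimate of Fact~\ref{fact:gaussiancdf}.

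First I would normalize away $\sigma$. Since each $x_i/\sigma \sim \N(0,1)$ and the event $\frac{x_1+\cdots+x_i}{i}\le -2\sigma$ is equivalent to $\frac{(x_1/\sigma)+\cdots+(x_i/\sigma)}{i}\le -2$, we may assume without loss of generality that $\sigma=1$. Writing $S_i = x_1+\cdots+x_i$, the quantity to bound is then $\Pr\bigl[\bigcup_{i=1}^T \{S_i \le -2i\}\bigr]$, i.e. the probability that the random walk $S_i$ ever dips below the line $-2i$.

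Next I would apply a union bound, $\Pr[\min_{i} S_i/i \le -2] \le \sum_{i=1}^T \Pr[S_i \le -2i]$. Since $S_i \sim \N(0,i)$, each term equals $1-\Phi(2\sqrt i)$, and Fact~\ref{fact:gaussiancdf} with $t=2\sqrt i$ (so that $t^2/2 = 2i$) gives $1-\Phi(2\sqrt i) \le \frac{1}{2\sqrt{2\pi i}}\, e^{-2i}$. The crucial feature is that this decays geometrically in $i$, so bounding $\frac{1}{2\sqrt{2\pi i}} \le \frac{1}{2\sqrt{2\pi}}$ for $i \ge 1$ and summing the resulting geometric series yields $\sum_{i=1}^\infty (1-\Phi(2\sqrt i)) \le \frac{1}{2\sqrt{2\pi}}\cdot\frac{e^{-2}}{1-e^{-2}}$, which is below $1/4$ and is uniform in $T$.

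There is no real obstacle here; the only thing to verify is the final numerical constant. The union bound — which might a priori look too lossy for what is essentially a maximal inequality — is in fact comfortably sufficient, because the super-exponential decay $e^{-2i}$ makes the sum dominated by its $i=1$ term (about $0.027$), leaving ample slack below the target $1/4$. One could instead invoke a maximal inequality for the drifting walk $S_i + 2i$, but that refinement is unnecessary given how loose the stated bound is.
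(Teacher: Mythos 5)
Your proof is correct and follows essentially the same route as the paper's: both reduce each event to the tail probability $1-\Phi(2\sqrt{i})$, bound it via Fact~\ref{fact:gaussiancdf}, and finish with a union bound and the geometric series $\sum_i e^{-2i}$, which sums to $\frac{1}{2\sqrt{2\pi}}\cdot\frac{e^{-2}}{1-e^{-2}} < 1/4$. The explicit normalization to $\sigma=1$ is a cosmetic difference only.
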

\begin{lemma}
\label{lem:gaussianmax}
For any $x^*, T > 0$, let $x_1, ...,x_T$ be sampled from $\N(0,\sigma^2)|_{\ge x^*}$ independently. Then we have
\[
\E\left[ \max_{i \in [T]} x_i \right] \leq \sigma \sqrt{ 2 \ln (2T) } + x^*. 
\]
\end{lemma}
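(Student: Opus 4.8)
The plan is to separate the truncation level $x^*$ from the fluctuation and then bound the expected maximum of the (nonnegative) overshoots $x_i - x^*$ by comparison with a half-normal, finishing with the standard moment-generating-function bound for maxima. Concretely I would first write $\E[\max_{i\in[T]} x_i] = x^* + \E[\max_{i\in[T]}(x_i - x^*)]$, reducing the claim to showing $\E[\max_i(x_i-x^*)] \le \sigma\sqrt{2\ln(2T)}$.

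The key step is to show that each overshoot $x_i - x^*$ (with $x_i\sim\N(0,\sigma^2)|_{\ge x^*}$ and $x^*>0$) is stochastically dominated by a half-normal random variable $|Z|$, where $Z\sim\N(0,\sigma^2)$. For $u\ge 0$,
\[
\Pr[x_i - x^* > u] = \frac{1-\Phi((x^*+u)/\sigma)}{1-\Phi(x^*/\sigma)}.
\]
Writing $S(t)=1-\Phi(t/\sigma)$ for the survival function of $\N(0,\sigma^2)$, the Gaussian survival function is log-concave, so $\ln S$ has a nonincreasing derivative and hence $S(x^*+u)/S(x^*)$ is nonincreasing in $x^*$ for every fixed $u\ge 0$. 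Evaluating at $x^*=0$ gives $S(u)/S(0)=2(1-\Phi(u/\sigma))=\Pr[|Z|>u]$, so $\Pr[x_i-x^*>u]\le\Pr[|Z|>u]$ for all $u\ge 0$. This is exactly where $x^*\ge 0$ is used, and it carries essentially all the content of the lemma.

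Finally, using independence I would couple each $x_i - x^*$ with an independent $|Z_i|$ so that $x_i - x^* \le |Z_i|$ almost surely, giving $\E[\max_i(x_i-x^*)]\le\E[\max_i |Z_i|]$. Now $\max_i|Z_i|$ is the maximum of the $2T$ variables $Z_1,\dots,Z_T,-Z_1,\dots,-Z_T$, each marginally $\N(0,\sigma^2)$, so the softmax/MGF bound applies: by Jensen's inequality and $\E[e^{\lambda Z}]=e^{\lambda^2\sigma^2/2}$, for any $\lambda>0$,
\[
\E[\max_i |Z_i|]\le\frac1\lambda\ln\!\Big(2T\,e^{\lambda^2\sigma^2/2}\Big)=\frac{\ln(2T)}{\lambda}+\frac{\lambda\sigma^2}{2},
\]
and choosing $\lambda=\sqrt{2\ln(2T)}/\sigma$ (valid since $T\ge 1$ makes $\ln(2T)>0$) yields $\E[\max_i|Z_i|]\le\sigma\sqrt{2\ln(2T)}$. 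Combined with the reduction this gives the lemma.

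The main obstacle is the stochastic-dominance step; once the log-concavity (equivalently, monotone-hazard) property of the Gaussian is invoked, everything else is routine. I note that the cruder approach of integrating the union bound $\Pr[\max_i(x_i-x^*)>u]\le\min\{1,2T(1-\Phi(u/\sigma))\}$ gives only $\sigma\sqrt{2\ln(2T)}+O(\sigma)$, so it is precisely the exponential-moment method applied to the half-normal comparison that produces the factor $2$ inside the logarithm and hence the exact constant claimed.
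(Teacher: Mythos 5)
Your proof is correct and follows essentially the same route as the paper's: both reduce the truncated variables to half-normals via the stochastic dominance $x_i \preceq x^* + |Z_i|$ (the paper phrases this as ``$y_i + x^*$ stochastically dominates $x_i$'') and then apply the Jensen/moment-generating-function bound, which yields the same factor $2T$ inside the logarithm and the bound $\sigma\sqrt{2\ln(2T)}$. The only difference is that you actually justify the dominance step via log-concavity of the Gaussian survival function, a step the paper asserts without proof.
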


\section{Omitted Proofs }
\label{sec:omitted-proofs}

\subsection{Ommitted Proofs of Lemma \ref{lem:nodivub} and Lemma \ref{lem:nodiv}}
\begin{lemma}[Restatement of Lemma \ref{lem:nodivub}]
\label{lem:nodivubapp}
For any $T$, $A(0,T)\leq\sqrt{ \ln \left( \frac{1}{2\pi c^2}\right)}  + 2$. 
\end{lemma}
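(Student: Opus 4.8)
The plan is to reduce the claim to a bound on the expected maximum quality of any revealed item over the $T$ rounds, call it $q_{\max}$. Since every agent's utility is her value for the chosen item minus a nonnegative search cost, and since for $\sigma=0$ the value of the chosen item equals its revealed quality, each agent's utility is at most $q_{\max}$; hence $A(0,T)\le \E[q_{\max}]$, and it suffices to show $\E[q_{\max}]\le \sqrt{\ln(1/(2\pi c^2))}+2$. Two ingredients drive this: a constant upper bound on the Weitzman index $x^*$ of an unexplored item, and a structural argument that exploration stops once a sufficiently good item has been found.

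First I would bound $x^*$. A standard computation gives $\E_{v\sim\N(0,1)}[\max(v-x^*,0)] = \varphi(x^*) - x^*\bigl(1-\Phi(x^*)\bigr) = c$. Since $x^*>0$ and $1-\Phi(x^*)>0$, the subtracted term is positive, so $\varphi(x^*)>c$, i.e.\ $\tfrac{1}{\sqrt{2\pi}}e^{-(x^*)^2/2}>c$. Rearranging yields $(x^*)^2 < \ln\bigl(\tfrac{1}{2\pi c^2}\bigr)$, hence $x^* < \sqrt{\ln(1/(2\pi c^2))}$; note that the hypothesis $c<1/\sqrt{2\pi}$ is exactly what makes this logarithm positive.

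The heart of the argument is the structural claim, which I expect to be the main obstacle. For $\sigma=0$ a previously explored item has a deterministic value equal to its revealed quality, so in any round the Weitzman problem reduces to a comparison between the best known item (a point mass, with index $q_{\max}-c$) and the unexplored items (each with index $x^*$). Consequently an agent explores a fresh item only while $q_{\max}<x^*+c$, and once $q_{\max}\ge x^*+c$ no future agent ever explores again, so $q_{\max}$ is frozen. I would then track the running maximum of the fresh draws revealed across rounds: it is the running maximum of i.i.d.\ $\N(0,1)$ variables, stopped the first time it exceeds $x^*+c$, and the crossing draw is the new maximum, distributed as $\N(0,1)$ conditioned on exceeding $x^*+c$. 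The care needed here is in the Weitzman stopping analysis within and across rounds and in the monotonicity of $q_{\max}$, so that the terminal maximum is exactly a single Gaussian conditioned on $\{v>x^*+c\}$ (and the value after only $T$ rounds is stochastically no larger).

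Finally I would bound the conditional expectation. For a standard Gaussian, $\E[v\mid v>t]=\varphi(t)/(1-\Phi(t))$, which by Fact~\ref{fact:gaussianhr} is at most $\tfrac{t}{2}+\tfrac{\sqrt{t^2+4}}{2}\le t+1$, using $\sqrt{t^2+4}\le t+2$. Taking $t=x^*+c$ gives $\E[q_{\max}]\le x^*+c+1$. Combining with $x^*<\sqrt{\ln(1/(2\pi c^2))}$ and $c<1/\sqrt{2\pi}<1$ yields $\E[q_{\max}] < \sqrt{\ln(1/(2\pi c^2))}+2$, which is the desired bound on $A(0,T)$.
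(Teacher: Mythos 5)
Your proposal is correct and follows essentially the same route as the paper's proof: bound the Weitzman index $x^*$ of an unexplored item by roughly $\sqrt{\ln(1/(2\pi c^2))}$, argue that for $\sigma=0$ explored items become point masses with index $q_i-c$ so fresh exploration stops permanently once some quality exceeds $x^*+c$, and bound $\E[X \mid X > x^*+c]$ via the hazard-rate estimate of Fact~\ref{fact:gaussianhr}. Your derivation of the index bound is a slight simplification (dropping the positive term $x^*(1-\Phi(x^*))$ to get $\varphi(x^*)>c$ directly, instead of the paper's $c\le \varphi(x^*)/(x^*)^2$, which needs a $\max(1,\cdot)$ case split), but the argument is otherwise the same.
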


\begin{proof}
The proof plan is to first bound by a constant the Weitzman index $x^*$ of an unsearched item, i.e.\ one whose quality is not yet revealed. We then argue that once some item with sufficient quality has been identified (with quality $q_i > x^*+c$), no other item will be explored in future rounds. We conclude the proof by upper-bounding the expected quality of such an item conditional on its quality being higher than $x^*+c$. This directly gives a bound for the average utility of the agents. 

We start by bounding the Weitzman index $x^*$ of an unsearched item. For an unsearched item, the distribution of the value is $\N(0,1)$. The Weitzman index $x^*$ satisfies $\E[\max(X -x^*,0)] = c$ where $X \sim \N(0,1)$. Using Fact \ref{fact:gaussiancdf} on Gaussian distributions, we see
\begin{eqnarray*}
c &=& \E[\max(X -x^*,0)] \\
&=& \E[X -x^* | X-x^* >0] \cdot Pr[X-x^* >0]\\
 &=& \left(-x^* + \frac{\varphi(x^* )}{\Phi(-x^*)}\right) \cdot \Phi(- x^*)\\
 &=& (-x^*) (1- \Phi( x^*)) + \varphi(x^*) \\
 &\leq& (-x^*) \left(\frac{\varphi(x^*)}{x^*} -\frac{\varphi(x^*)}{(x^*)^3} \right) + \varphi(x^*)\\
&=& \frac{\varphi(x^*)}{(x^*)^2} = \frac{1}{\sqrt{2\pi} \cdot (x^*)^2 \cdot \exp( (x^*)^2/2)}.
\end{eqnarray*}
This implies an upper bound on the Weitzman index of an unsearched item: 
\[
x^* \leq \max \left(1, \sqrt{ \ln \left( \frac{1}{2\pi c^2}\right)}\right).
\]
 
When an item has been searched with observed quality $q_i$, since the diversity level is 0, the values of this item for future agents are determinstically $q_i$. Therefore the Weitzman index of this item is simply $q_i -c$. 

Once some item has quality $q_i > c + x^*$, its Weitzman index is larger than $x^*$ which means it will be searched before any unsearched items. And once this item is searched in a round, we know its realized value is $q_i$ which is larger than $x^*$. Therefore no more unsearched items will be searched.

Let $Q$ be the quality of the first item with quality larger than $c + x^*$. We have $\E[Q] = \E_{X \sim \N(0,1)}[X|X > c+x^*]$. We know that starting from the round this item is searched, each agent's utility is at most $Q - c$ and before this item is ever searched, each agent's utility is at most $c+x^* - c \leq Q - c$. Therefore we can bound the average utility as following: 
\begin{eqnarray*}
A(0,T) &\leq& \E[Q - c] \leq \E_{X \sim \N(0,1)}[X|X > c+x^*] - c \\ 
&=& \frac{\varphi(c+x^*)}{1- \Phi(c+x^*)} - c \\
&\leq& \frac{x^* - c + \sqrt{(x^*+c)^2 + 4}}{2} \text{    ( by Fact \ref{fact:gaussianhr})}\\
&\leq& x^*+1 \\
&\leq& \sqrt{ \ln \left( \frac{1}{2\pi c^2}\right)}  + 2. 
\end{eqnarray*}
\end{proof}

\begin{lemma}[Restatement of Lemma \ref{lem:nodiv}]
\label{lem:nodivapp}
For any $T$, $A(1,T) \leq A(0,T)$. 
\end{lemma}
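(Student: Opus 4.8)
The plan is to exhibit a single benchmark quantity, the first agent's expected utility, show that it equals $A(1,T)$ exactly, and lower-bound $A(0,T)$ by it. Write $u_1$ for the expected utility of the round-$1$ agent. Because the prior over each item's total value $v_{i,1}=q_i+s_{i,1}$ is $\N(0,1)$ regardless of $\sigma$, the first agent assigns the same index $x^*$ to every fresh item and runs the identical optimal policy in every society; hence $u_1$ does not depend on $\sigma$. This common value of $u_1$ is what ties the two extremes together.

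First I would settle the fully diverse case. When $\sigma=1$ the quality $q_i$ is drawn from $\N(0,0)$, i.e.\ $q_i=0$ deterministically, so revealing it after an exploration conveys no information: the posterior $\mathcal{F}_{i,t}=\N(q_i,\sigma^2)=\N(0,1)$ of a previously explored item coincides with the $\N(0,1)$ prior of an untouched one. Consequently every agent faces exactly the same Pandora's problem as the first agent, obtains expected utility $u_1$, and therefore $A(1,T)=u_1$ for all $T$.

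Next I would lower-bound $A(0,T)$ by $u_1$ round by round. Fix a round $t$ and condition on any history $\mathcal{H}_t$. Among the infinitely many items, finitely many have revealed qualities (now deterministic values, since $\sigma=0$ makes $s_{i,t}\equiv 0$) and the rest are fresh with $\N(0,1)$ prior. Consider the feasible strategy that simply \emph{ignores} the revealed items and runs Weitzman's rule on the fresh ones. Since fresh items are i.i.d.\ $\N(0,1)$ for every agent and there are always infinitely many available, this strategy induces exactly the same outcome distribution as the first agent's policy and thus attains expected utility $u_1$. As the agent actually uses the optimal policy, $\E_{\mathcal{H}_t}[u_t(0,\mathcal{F}(\mathcal{H}_t))]\ge u_1$ for every $t$. (One even expects this quantity to be nondecreasing in $t$, since later histories reveal weakly more qualities, but the uniform bound $\ge u_1$ already suffices.)

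Finally, averaging the per-round bound over $t=1,\dots,T$ yields $A(0,T)\ge u_1=A(1,T)$, which is the claim. The step requiring the most care is the lower bound on $A(0,T)$: I must argue that discarding the revealed-quality items is a legitimate strategy whose outcome law matches the first agent's exactly. This is precisely where the infinitude of items (a fresh item is always available, so the mimicking never runs out of options) and the $\sigma=0$ assumption (revealed qualities are deterministic values rather than merely informative signals) are used.
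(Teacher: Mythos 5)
Your proof is correct, and its skeleton is the same as the paper's: both arguments reduce the claim to the first agent's expected utility $u_1$, which is diversity-independent because an unexplored item's value is $\mathcal{N}(0,1)$ for every $\sigma$; both then show $A(1,T)=u_1$ (exploration reveals nothing useful at full diversity, so every round replicates round one) and $A(0,T)\ge u_1$. Where you differ is in the feasible strategy used to certify the lower bound on $A(0,T)$. The paper's later agents \emph{exploit} history: each simply re-searches the item chosen by the first agent, whose value under $\sigma=0$ is the same for everyone, so each agent's utility pointwise dominates the first agent's. Your later agents instead \emph{discard} history: they ignore all revealed items and rerun Weitzman's rule on fresh ones, which reproduces the round-one problem exactly because fresh items are i.i.d.\ $\mathcal{N}(0,1)$ and infinitely many remain. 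Both invocations of optimality are legitimate, but your variant buys something the paper's does not: it never actually uses $\sigma=0$. Since the restart strategy ignores the revealed items entirely, the only fact needed is that fresh items have prior $\mathcal{N}(0,1)$, which holds for every diversity level; so your argument in fact proves $A(1,T)\le A(\sigma,T)$ for all $\sigma\in[0,1]$, a strictly more general statement. Your closing remark that the $\sigma=0$ assumption (revealed qualities being deterministic) is ``precisely where'' the argument uses it is therefore mistaken, though harmlessly so --- that assumption plays no role once the revealed items are discarded. The paper's exploit-history strategy, by contrast, genuinely leans on the correlation and is the version that reflects why low-diversity utility \emph{improves} across rounds rather than merely matching round one.
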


\begin{proof}
First notice that when the diversity level is 1, values are completely independent of the history and have the same distributions across agents. Therefore, the expected utilities of all agents are the same. Therefore $A(1,T)$ equals the expected utility of the first agent. 

When the diversity level is 0, consider the following strategy for agents except the first agent: just search the item chosen by the first agent and choose it. By using this strategy, we know all agents will get at least the utility of the first agent. By the optimality of Weitzman index strategy, we know $A(0,T)$ will be at least the expected utility of the first agent. 

Finally, notice that the expected utility of the first agent does not depend on the diversity level. This implies $A(0,T) \geq A(1,T)$. 
\end{proof}

\subsection{Omitted Proof of Proposition~\ref{prop:inter}}

We proceed to show Proposition~\ref{prop:inter}. We first relate the average expected utility $A(\sigma,T)$ to the expectation of the positive part of the highest quality explored during the first $T$ rounds when the diversity level is $\sigma$, denoted by $M(\sigma,T)$. Here the positive part of $x$ means $\max(x,0)$. In Lemma \ref{lem:quality}, we show that $A(\sigma,T)$ is sandwiched between $M(\sigma,\gamma T)$ and $M(\sigma,T)$. Subsequently, we provide sharp bounds on $M(\sigma,T)$ and show that $M(\sigma,\gamma T)$ and $M(\sigma,T)$ are very close for large $T$.

\begin{lemma}
\label{lem:quality}
For any diversity level $\sigma\in(0,1)$ and any $\gamma \in (0,1)$, $$\left(1-\gamma\right)( M\left(\sigma,\gamma T\right) - c) \leq A(\sigma,T) \leq M(\sigma,T) + O_c(1).$$
\end{lemma}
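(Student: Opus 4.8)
The plan is to express each agent's utility through Weitzman's characterization of the optimal Pandora payoff and then sandwich it between the positive part of the best explored quality at two time scales. Recall that the optimal expected utility of a single agent facing items with values $v_i$ and Weitzman indices $x_i^*$ equals $\E[\max(0,\max_i \kappa_i)]$, where $\kappa_i=\min(v_i,x_i^*)$ is the capped value. For agent $t$, a fresh item has index $x^*$ and capped value $\min(v_{i,t},x^*)\le x^*$, while an already-explored item $i$ has prior $\N(q_i,\sigma^2)$ and, by translation invariance, index $x_{i,t}^*=q_i+y^*_\sigma$ for a constant $y^*_\sigma$ depending only on $c$ and $\sigma$; its capped value is therefore $\kappa_{i,t}=q_i+\min(s_{i,t},y^*_\sigma)\le q_i+y^*_\sigma$. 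Hence $u_t=\E[\max(0,\max_i\kappa_{i,t})]$.

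For the upper bound, let $M_t=\max_{i\text{ explored by round }t}q_i$. Bounding the fresh items by $x^*$ and the explored items by $q_i+y^*_\sigma\le M_t+y^*_\sigma$, and using that the items explored by round $t\le T$ form a subset of those explored by round $T$ (so $M_t\le M_T$), I would obtain
$$u_t\le \E\big[\max(0,M_T)\big]+x^*+|y^*_\sigma| = M(\sigma,T)+O_c(1).$$
Averaging over $t$ preserves the estimate, giving $A(\sigma,T)\le M(\sigma,T)+O_c(1)$. The essential point is that capping each value at its reservation index prevents a large subjective score from inflating the realized value, so that only the quality survives up to the additive constant $y^*_\sigma$.

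For the lower bound I would exhibit a suboptimal strategy for the late agents. Fix $t>\gamma T$ and set $Q=\max_{i\text{ explored by round }\gamma T}q_i$, which is measurable with respect to the history before round $\gamma T$ and hence known to agent $t$. Consider the strategy that explores the best such item $i^*$, observes $s_{i^*,t}$, and then takes the better of $i^*$ and an outside option of value $0$; its expected utility is $\E[\max(0,Q+s_{i^*,t})]-c$. Since $\E[s_{i^*,t}\mid Q]=0$ and $x\mapsto\max(0,x)$ is convex, Jensen's inequality gives $\E[\max(0,Q+s_{i^*,t})]\ge\E[\max(0,Q)]=M(\sigma,\gamma T)$, so by optimality of the Weitzman policy $u_t\ge M(\sigma,\gamma T)-c$ for every $t>\gamma T$. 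As every $u_t\ge 0$, discarding the first $\gamma T$ rounds yields
$$A(\sigma,T)\ge\frac1T\sum_{t>\gamma T}u_t\ge\frac{(1-\gamma)T}{T}\big(M(\sigma,\gamma T)-c\big)=(1-\gamma)\big(M(\sigma,\gamma T)-c\big),$$
as claimed (with $\gamma T$ rounded to an integer, which changes nothing).

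The main obstacle is the upper bound: the naive estimate $v_{i^*,t}\le M_T+\max_i s_{i,t}$ is useless, because an adaptively chosen subjective score can be large, so the argument must pass through the capped-value form of Weitzman's value in order to truncate the subjective contribution at the constant $y^*_\sigma$. The remainder is bookkeeping: verifying that an explored item's index is its quality shifted by a $(c,\sigma)$-dependent constant, that $u_t\ge 0$ (each agent can guarantee nonnegative utility), and that the rounding of $\gamma T$ is harmless.
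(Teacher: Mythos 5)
Your proof is correct, and it splits cleanly into one half that mirrors the paper and one half that does not. The lower bound is essentially the paper's argument: agents after round $\gamma T$ can re-explore the single best item known by round $\gamma T$, agents before that contribute nonnegative utility, and you average. (Your Jensen step, $\E[\max(0,Q+s_{i^*,t})]\ge \E[\max(0,Q)]$, is in fact a slightly more careful handling of the case $Q<0$ than the paper's one-line claim that this strategy "gives utility $M(\sigma,\gamma T)-c$.") The upper bound is where you genuinely diverge. The paper never invokes the capped-value identity $u_t=\E[\max(0,\max_i \min(v_{i,t},x_{i,t}^*))]$; instead it argues that the round-$T$ agent has the largest expected utility among all agents (since more qualities are revealed over time), and then bounds that single agent's utility by analyzing her stopping behavior: if she stops at item $j$, her utility is at most $\max(q_j+s_{j,T},x_j^*)-c$ with $s_{j,T}$ conditioned on exceeding the next index, and a case analysis on whether $j$ was previously explored (using $x_j^*-q_j\le x^*$ for fresh items) plus Lemma~\ref{lem:gaussianmax} finishes. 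Your route applies the capped-value characterization to every agent simultaneously, truncating each subjective score at its reservation level $y^*_\sigma$, which eliminates both the "last agent is best" claim and the stopping-rule case analysis; this is cleaner and bounds each $\E[u_t]$ by $M(\sigma,t)+O_c(1)$ directly. What the paper's route buys is self-containedness: it uses only the index definition and the optimality of the index policy, both stated in its model section, whereas you import a nontrivial theorem (the Weitzman/Kleinberg--Waggoner--Weyl capped-value formula), which also requires independence of item values from the agent's perspective (true here) and a limiting argument for infinitely many items.

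Two details you should tighten. First, your additive constant is $x^*+|y^*_\sigma|$ and $y^*_\sigma$ depends on $\sigma$, while the claimed bound is $O_c(1)$, uniform in $\sigma$; this is fine, but needs the one-line observation that $-c\le y^*_\sigma\le x^*$ (the lower bound because $c=\E[(s-y^*_\sigma)^+]\ge -y^*_\sigma$, the upper because $\N(0,\sigma^2)$ is dominated in convex order by $\N(0,1)$, so the index can only shrink). Second, the model as stated forces the chosen item to be an explored one, so the "outside option of value $0$" in both your capped-value formula and your lower-bound strategy does not literally exist; this is harmless because an agent whose inspected value is negative can continue on fresh items, whose positive index $x^*$ guarantees nonnegative continuation value, and the paper makes the same implicit move when it asserts every agent's utility is at least $0$.
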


\begin{proof}
Let's first prove the left hand side. For each agent after round $\gamma T$, a sub-optimal strategy is to explore and pick the item with highest quality searched before the end of round $\gamma T$ and leave. This gives utility $M(\sigma,\gamma T) - c$ in expectation. By the optimality of the Weitzman index strategy, agents after round $\gamma T$ should at least get $M(\sigma,\gamma T) - c$ in expectation. Again by the optimality of the Weitzman index strategy, agents in the first $\gamma T$ rounds should get utility at least 0 in expectation. Thus, we have $A(\sigma,T) \geq \left(1-\gamma\right) (M(\sigma,\gamma T) - c)$. 

Now let's consider the right hand side. The main idea is to upper bound the expected utility of the last round $T$ by $M(\sigma,T) + O_c(1)$. 

Let $n$ be the number of items explored in the first $T$ rounds, with qualities $q_1,...,q_n$. In round $T$, the agent's expected utility will be at least as large as the expected utility of agents in previous rounds as more information is revealed. Let $j$ be the item where the $T$-th agent stops his search. We fix $j, n$ and $q_1,..,q_n$ in this argument and expectations are taken over randomness of subjective scores of the $T$-th agent. $T$-th agent's value for any previously explored option is less than $x^*_j$, the Weitzman index of item $j$, as he chose to explore item $j$. Moreover, since he decided to stop at $j$, his value at $j$, $q_j + s_{j,T}$ must be higher than $x^*_{j+1}$, the Weitzman index of the next item to search in round $T$. After the search, the $T$-th agent might take item $j$ or some other items with value smaller than $x^*_j$. Thus, his expected utility, is at most $\max(q_j + s_{j,T},x^*_j)- c$ and $s_{j,T}$ is sampled from $ \N(0,\sigma^2)$ conditioned on $q_j + s_{j,T} > x^*_{j+1}$. 
 
Using Lemma \ref{lem:gaussianmax}, we can upper bound the expected utility of the agent in round $T$ by 
\begin{align*}
\E[\max(q_j + s_{j,T},x^*_j)] &\leq \E_{X\sim \N(0,\sigma^2)}[q_j + X|q_j +X > x_{j}^*] \\
&\leq q_j + \E_{X\sim \N(0,\sigma^2)}[X|X > \max(x_{j}^* -q_j,0)] \\
&\leq q_j + \max(x_{j}^* -q_j,0) + O(1) ~~~~~~~~\text{(by Lemma \ref{lem:gaussianmax})}
\end{align*}
 
Recall $x^*$ is the Weitzman index of an unexplored item. If item $j$ has been searched before round $T$, we have $x^*_j =x^*$. We can simply get
\[
\E[\max(q_j + s_{j,T},x^*_j)] \leq \max(q_j,0) + x^* +O(1).
\]

If item $j$ hasn't been searched before round $T$, we bound the difference between $x_j^*$ and $q_j$. The definition of Weitzman index gives us,
\begin{align*}
c &= \E_{X \sim \N(0,\sigma^2)} [\max(X - ( x_j^*-q_j),0)] \\
&\leq \E_{X \sim \N(0,1)} [\max(X - ( x_j^*-q_j),0)]. 
\end{align*}
Note that the last expression is strictly less than $c$ when $x_j^* - q_j > x^*$ implying that $x_j^* - q_j \le x^*$. Then we get 
\begin{align*}
\E[\max(q_j + s_{j,T},x^*_j)] &\leq q_j + x^* + O(1)
\end{align*} 
To sum up, no matter item $j$ has been searched or not before round $T$, the expected utility of the agent in round $T$ is at most $M(\sigma,T) +  x^* + 1$. Since the agent in round $T$ has expected utility at least as high as expected utilities of agents in previous rounds, we conclude that $A(\sigma,T) \leq M(\sigma,T) +  x^* + O(1)$.
\end{proof}

In Lemma \ref{lem:interlb}, we give a lower bound on $M(\sigma,T)$. The main idea of the proof is to give a lower bound on the number of searched items and then show that $M(\sigma,T)$ is roughly the expected value of the maximum of that many independent Gaussians. The lower bound on the number of searched items can be achieved by considering events (indicated as $W_t$ in the proof) when all the previously searched items have low values (caused by very negative subjective scores). 

\begin{lemma}
\label{lem:interlb}
For any diversity level $\sigma\in(0,1)$, and any $T>\exp(13/\sigma^2)$, $$M(\sigma,T) \geq  \sqrt{1-\sigma^2}  \cdot \sqrt{2  \ln( \sigma^2 \ln T)} - O(1).$$ 
\end{lemma}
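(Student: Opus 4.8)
The plan is to lower bound $M(\sigma,T)$ by exhibiting, with probability $1-o(1)$, an explored item whose quality exceeds a threshold $h \approx \sqrt{1-\sigma^2}\sqrt{2\ln(\sigma^2\ln T)}$, and then passing to the positive-part expectation. First I would fix the tie-breaking order in which unexplored items (all of which carry the same Weitzman index $x^*>0$) are opened, so that the set of items ever explored before any given round is always a prefix $\{1,\dots,k\}$ of this order; write $q_1,q_2,\dots$ for the qualities of the items in this order, which are i.i.d.\ $\N(0,1-\sigma^2)$. I set the target count $n = \frac{\sigma^2\ln T}{2(1-\sigma^2)\ln\ln T}$ and threshold $h = \sqrt{1-\sigma^2}\sqrt{2(\ln n - \ln(4\sqrt{\pi}\ln n))}$, which is within an additive $O(1)$ of the claimed bound once $n$ is substituted and the lower-order $\ln\ln\ln T$ terms are absorbed. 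The hypothesis $T > \exp(13/\sigma^2)$ is precisely what makes the several inequalities among $n$, $h$, and $\sigma$ used below hold simultaneously. A standard maximum-of-Gaussians estimate then gives $\Pr[\max(q_1,\dots,q_n) \le h] = \Phi\!\left(h/\sqrt{1-\sigma^2}\right)^n = o(1)$, since by Fact~\ref{fact:gaussiancdf} the per-item tail is at least $\Theta(\sqrt{\ln n}/n)$, so the $n$-th power is $\exp(-\Omega(\sqrt{\ln n}))$.

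The core step is to force at least $n$ items to be explored, unless an item of quality $>h$ surfaces even earlier. I would define $W_t$ to be the event that in round $t$ every item $i\le n$ with $q_i \le h$ has negative realized value $v_{i,t} = q_i + s_{i,t} < 0$. The key observation is that since $x^* > 0$, an agent who sees negative value for all currently known items must open a fresh item; hence, as long as every so-far-explored item has quality $\le h$, each round with $W_t=1$ extends the explored prefix by at least one new item. Conditioning on the qualities, $W_t$ depends only on the round-$t$ subjective scores, so the $W_t$ are independent across $t$. The lower tail bound of Fact~\ref{fact:gaussiancdf} yields $\Pr[W_t=1] \ge (1-\Phi(h/\sigma))^n \ge T^{-3/4}$ for the chosen parameters, whence $\E[\sum_t W_t] \ge T^{1/4} \ge 2n$ and a Chernoff bound gives $\sum_{t=1}^T W_t \ge n$ with probability $1-o(1)$.

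Combining the two high-probability events, with probability $1-o(1)$ either some explored item already has quality $>h$, or all explored items have quality $\le h$; in the latter case $\sum_t W_t \ge n$ forces the entire prefix $\{1,\dots,n\}$ to be explored, but then $\max(q_1,\dots,q_n) > h$ contradicts that all explored items are $\le h$. Either way an item of quality exceeding $h$ is explored, so $M(\sigma,T) \ge (1-o(1))h = \sqrt{1-\sigma^2}\sqrt{2\ln(\sigma^2\ln T)} - O(1)$; here the $(1-o(1))$ loss is harmless because the failure probabilities decay faster than $1/\sqrt{\ln\ln T}$, so $o(1)\cdot h = o(1)$.

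The step I expect to be the main obstacle is the simultaneous calibration of $n$ and $h$. There is a genuine tension: a larger $h$ gives a stronger quality lower bound, but it also makes the subjective scores needed for $W_t$ more extreme, shrinking $\Pr[W_t=1]$ and hence the number of forced explorations. Balancing the two Gaussian tail estimates so that $\tfrac12(h/\sigma)^2 n \lesssim \tfrac34 \ln T$ while $h$ still tracks $\sqrt{1-\sigma^2}\sqrt{2\ln n}$ is exactly what pins down $n = \Theta(\sigma^2\ln T/\ln\ln T)$ and produces the threshold $T > \exp(13/\sigma^2)$. A secondary subtlety is justifying that the explored set is always a prefix of the fixed order, which is what makes the counting inequality ``number of explored items $\ge \sum_t W_t$'' valid and lets the argument close.
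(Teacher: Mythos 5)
Your proof skeleton (threshold $h$, the event $W_t$ that every low-quality item among the first $n$ looks bad to agent $t$, the max-of-$n$-Gaussians estimate, and a union bound) is exactly the paper's. The genuine gap is in the calibration: your claim that the hypothesis $T>\exp(13/\sigma^2)$ ``is precisely what makes the several inequalities among $n$, $h$, and $\sigma$ hold simultaneously'' is false, and with your parameters the argument does not deliver the lemma as stated (explicit threshold, uniform additive $O(1)$). With $n = \frac{\sigma^2\ln T}{2(1-\sigma^2)\ln\ln T}$, the hypothesis only controls $Z:=\sigma^2\ln T>13$; it says nothing about $\ln\ln T = \ln Z + \ln(1/\sigma^2)$, which sits in your denominator. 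Concretely, take $\sigma^2=e^{-100}$ and $T=\exp\bigl(100\,e^{100}\bigr)$, so $Z=100$ and the hypothesis holds: then $n\approx 100/(2\cdot 104.6)<1$, your prefix $\{1,\dots,n\}$ is empty and $h$ is undefined, yet the claimed bound $\sqrt{1-\sigma^2}\sqrt{2\ln 100}-O(1)\approx 3-O(1)$ is not trivially true. Worse, letting $\sigma\to 0$ with $Z=\ln(1/\sigma^2)$ keeps $n\approx 1/2$ while the target grows like $\sqrt{2\ln\ln(1/\sigma^2)}\to\infty$, so no universal constant rescues this regime. At the other extreme ($\sigma$ close to $1$, $T$ near the threshold) your side conditions $n<T^{1/4}/2$ and $h/\sigma>4$ also fail. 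What your argument actually establishes is the weaker fixed-$\sigma$ asymptotic statement, valid for $T$ beyond some unspecified $\sigma$-dependent threshold, with a $(1-o(1))$ multiplicative loss.

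The two ideas you are missing are how the paper gets uniformity. First, calibrate $n$ by $Z$ alone, e.g.\ $n=Z/\ln Z$, so that $Z>13$ immediately gives $n\ge 5$ no matter how $\ln\ln T$ compares to $Z$, and $\ln n$ is within additive lower-order terms of $\ln Z$ itself rather than of $\ln Z - \ln\bigl(2(1-\sigma^2)\ln\ln T\bigr)$. Second, use $W_t$ more strongly: since under $W_t=1$ \emph{every} item $i\le n$ with $q_i\le h$ --- explored or not --- has negative value to agent $t$, a \emph{single} round with $W_t=1$ already forces that agent to keep opening fresh items until he reaches the first item of quality $>h$ in $[n]$; you do not need $n$ separate good rounds. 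This eliminates the Chernoff step and the constraint $T^{1/4}\ge 2n$ entirely: one only needs $\Pr[\exists t:\,W_t=1]\ge 1-\bigl(1-\Pr[W_t=1]\bigr)^T$ to be close to $1$, which tolerates $\Pr[W_t=1]$ as small as $\frac{\ln Z}{T}$ rather than your $T^{-3/4}\ge 2n/T$. That exponentially weaker requirement on $\Pr[W_t=1]$ is precisely what permits the larger, $\ln\ln T$-free choice of $n$, and hence an $h$ within a true additive $O(1)$ of $\sqrt{1-\sigma^2}\sqrt{2\ln(\sigma^2\ln T)}$ over the whole stated range of $(\sigma,T)$.
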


\begin{proof}
  To prove the statement, we fix a sufficiently large parameter $n$ and we show that with high probability after a sufficiently large number of rounds $T$, two events will happen:
  \begin{itemize}
    \item The maximum quality among the first $n$ items is at least $h = \sqrt{1-\sigma^2} \cdot \sqrt{2} \cdot (\sqrt{\ln n } - 1)$,
    \item At least one agent will explore an item with quality higher than $h$.
  \end{itemize}

We first bound the probability that the maximum quality of the first $n$ items is less than $h$. As the qualities of all items are independent, this is a standard calculation about the maximum of independent Gaussian random variables. Setting $g = \frac{h}{\sqrt{1- \sigma^2}}$, we have:
	\begin{align*}
		&~~~~Pr\left[ \max(q_1,...,q_{n}) \leq h \right]\\ &= \left(Pr\left[q_1 < h\right] \right)^{n}  
		=  \Phi \left( \frac{h}{\sqrt{1- \sigma^2}} \right)^{n} \\
    &\leq \left(1- \frac{\exp\left(- \frac{1}{2}\cdot g^2\right)}{\sqrt{2\pi} }\cdot \frac{g}{g^2+1}\right)^n & \text{from Fact~\ref{fact:gaussiancdf}}\\
		&\leq \exp\left(-n \cdot \frac{\exp\left(- \frac{1}{2}\cdot g^2\right)}{\sqrt{2\pi} }\cdot \frac{g}{g^2+1}\right) \\
		&\leq \exp\left(- \frac {\exp{(2\sqrt{\ln n})}} {  e \sqrt{\pi}}\cdot \frac{\sqrt{\ln n}-1}{2(\sqrt{\ln n}-1)^2+1}\right) \leq \frac{ 1 } { \sqrt{\ln n} }   & \text{for $n \ge 5$}.
	\end{align*}

Assuming that there is some item with quality higher than $h$ among the first $n$, we now bound the probability that some agent will explore it. Let $W_t$ be the indicator of the event that in round $t$, for all items $i \in [n]$ with $q_i \leq h$,  $s_{i,t} < -h$. It is clear that if $W_t = 1$ for some $t$, $v_{i,t} < 0$ for all items with quality less than $h$ so the agent $t$ will keep exploring until he finds the item with quality higher than $h$.

We lower bound the probability that $W_t = 1$. We have that: 
	\begin{align*}
		Pr[W_t = 1] &\geq \left(Pr\left[s_{i,t} < -h \right] \right)^{n} = \left(1-\Phi\left(\frac{h}{\sigma}\right)\right)^{n} \\
		&\geq \exp\left(-\frac{1}{2}\cdot \left(\frac{h}{\sigma}\right)^2\cdot n \right) \geq \exp\left(-\frac{n \ln n}{\sigma^2} \right).
	\end{align*}

Setting $n = \frac{Z}{\ln(Z)}$ for $Z = \sigma^2 \ln T$, we get that $Pr[W_t = 1] \ge \frac {\ln Z} {T}$. Thus, the probability that no agent will find the high quality item is at most: $Pr[\forall t, W_t = 0] \le \left( 1 - \frac {\ln Z} {T} \right)^T \le \frac {1} {Z} \le \frac{ 1 } { \sqrt{\ln n} }$.

Thus, overall we have that if $n \ge 5$
	\begin{align*}
M(\sigma,T) &\geq (1 - \frac{ 2 } { \sqrt{\ln n} }) h \ge \sqrt{1-\sigma^2} \cdot \sqrt{2} (\sqrt{ \ln n } - 3 ) \\
&\ge \sqrt{1-\sigma^2} \cdot \sqrt{2} (\sqrt{ \ln Z } - 4 ) &\text{for $Z > n \ge 5$}\\
&\ge \sqrt{1-\sigma^2} \cdot \sqrt{2 \ln ( \sigma^2 \ln T) } - 6.
	\end{align*}
Thus, setting $T$ so that $n \ge 5$, we get the required bound.
\end{proof}

In addition to this lower bound, in Lemma \ref{lem:ub}, we provide an upper bound on $M(\sigma,T)$. The main idea of the proof is to consider items with good qualities and show that the chance of searching a new item decreases exponentially with the number of such items. We perform the analysis using a coupling argument to an imaginary setting in which a larger number of items is always explored.

\begin{lemma}
\label{lem:ub}
For any diversity level $\sigma\in(0,1)$, $$M(\sigma,T) \leq \sqrt{1-\sigma^2}  \cdot\sqrt{2 \ln(1+\sigma^2\ln( T))}+ O_{c}(1).$$
\end{lemma}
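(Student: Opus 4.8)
The plan is to bound the number of distinct items that are ever explored (equivalently, revealed) during the first $T$ rounds, and then to use the fact that the maximum of that many independent $\N(0,1-\sigma^2)$ qualities is about $\sqrt{1-\sigma^2}\cdot\sqrt{2\ln n}$. Concretely, if I can show that with high probability the number of explored items is $n\le N$ for some $N=\Theta_c(\ln T)$, then by a max-of-Gaussians estimate in the style of Lemma~\ref{lem:gaussianmax} I get $M(\sigma,T)\le \sqrt{1-\sigma^2}\cdot\sqrt{2\ln(2N)}+O(1)$. Since $\ln(2N)=\ln\ln T+O_c(1)$ while $\ln(1+\sigma^2\ln T)=\ln\ln T+O_\sigma(1)$, the two quantities inside the square roots differ only by an additive constant, so after taking square roots the difference is $o(1)$ and is absorbed into the $O_c(1)$ term. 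Thus the whole task reduces to the claim $n=O_c(\ln T)$ with high probability, plus a routine argument that the rare event where $n$ is larger contributes negligibly to the expectation (using $n\le T$ always and Cauchy--Schwarz).

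To bound $n$, the key observation is a clean characterization of when a \emph{fresh} (never-before-explored) item is opened. Running the Weitzman rule, an agent opens boxes in decreasing index order and reaches a fresh box (whose index equals the unsearched index $x^*$) only if every already-revealed box of index larger than $x^*$ returns a realized value below $x^*$. Writing $\tau$ for a constant (depending on $c,\sigma$) chosen so that any revealed item of quality $q_i>\tau$ both has index exceeding $x^*$ and satisfies $\Pr[q_i+s_{i,t}<x^*]\le 1/2$, this means a fresh item is opened in round $t$ only if every revealed item $i$ with $q_i>\tau$ has $q_i+s_{i,t}<x^*$. As the scores $s_{i,t}\sim\N(0,\sigma^2)$ are independent across items, the probability of opening a fresh item in round $t$, conditional on the current revealed set, is at most $(1/2)^{m}$, where $m$ is the number of revealed items of quality above $\tau$. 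This is the content of the ``threshold stopping'' coupling: replacing the Weitzman rule by the rule that an agent stops upon seeing any revealed item of value at least $x^*$ can only make exploration more frequent, and it is this more-explorative process whose per-round exploration probability is bounded by $(1/2)^m$.

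With this per-round bound in hand, I finish by a union bound over rounds together with concentration of $m$. Let the explored qualities in order be $q_1,q_2,\dots$, i.i.d.\ $\N(0,1-\sigma^2)$, and let $m_k$ be the number among the first $k$ exceeding $\tau$; then $m_k$ concentrates around $\beta k$, where $\beta=1-\Phi(\tau/\sqrt{1-\sigma^2})\in(0,1)$ is a constant. For the $(k{+}1)$-st item ever to be explored there must be some round (out of at most $T$) in which exactly $k$ items are revealed and a fresh item is opened, so conditioning on $q_1,\dots,q_k$ this probability is at most $\min\{1,\,T\cdot 2^{-m_k}\}$. Choosing $N=\Theta_c(\ln T)$ large enough that $\beta N/2\ge 2\log_2 T$, a Chernoff bound gives $m_N\ge 2\log_2 T$ except with exponentially small probability, whence $T\cdot 2^{-m_N}\le T^{-1}$ and $\Pr[n>N]$ is at most $T^{-1}$ plus an exponentially small term. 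Combining with the first paragraph yields the stated bound.

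The step I expect to be the main obstacle is making the coupling rigorous: the revealed sets of the true Weitzman process and of the simplified threshold process evolve differently and are correlated with the qualities, so I must argue carefully that passing to the threshold rule can only increase the number of explored items and that the factor $(1/2)^m$ may be applied across rounds despite the dependence of $m$ (and of the number of rounds spent at each exploration level) on the history. The remaining delicate point is purely bookkeeping: verifying that $n=\Theta_c(\ln T)$ produces precisely $\sqrt{2\ln(1+\sigma^2\ln T)}$ up to $O_c(1)$, and controlling the negligible contribution of the rare event $\{n>N\}$ to the expectation $M(\sigma,T)$.
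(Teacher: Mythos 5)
Your architecture is essentially the paper's: a coupling to a simplified threshold-stopping process that explores at least as much, an exponential-in-$m$ bound on the per-round probability of opening a fresh item, a union bound over the $T$ rounds, and a max-of-Gaussians estimate. The two points you flag as delicate (the coupling and the rare-event bookkeeping) are handled exactly this way in the paper and do go through. The genuine gap is quantitative and sits precisely where you chose the constant $1/2$: by picking $\tau$ so that a revealed item with $q_i>\tau$ is passed with probability at most $1/2$, you can only conclude that the number of revealed high-quality items is about $\log_2 T$, and hence $n=\Theta(\ln T)$ with no dependence on $\sigma$. That is too weak for the lemma, whose whole content is the $\sigma^2$ \emph{inside} the logarithm: the effective number of useful explored items is $O(1+\sigma^2\ln T)$, which for small $\sigma$ is dramatically smaller than $\log_2 T$. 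Concretely, if $\sigma^2=1/\ln T$, the claimed bound is $O_c(1)$ while your bound is $\sqrt{1-\sigma^2}\cdot\sqrt{2\ln(C\ln T)}=\Theta(\sqrt{\ln\ln T})$. Your assertion that $\ln(2N)$ and $\ln(1+\sigma^2\ln T)$ ``differ only by an additive constant'' that is ``absorbed into the $O_c(1)$ term'' is where this hides: the difference is $2\ln(1/\sigma)+O_c(1)$, which is $O_\sigma(1)$ but not $O_c(1)$, and after taking square roots it leaves an error of order $\sqrt{(1-\sigma^2)\ln(1/\sigma)}$, unbounded as $\sigma\to 0$. This uniformity in $\sigma$ is not cosmetic: Proposition~\ref{prop:inter} and the proof of Theorem~\ref{thm:main-intermediate} apply a single constant $C_1$ to both diversity levels, and if $C_1$ grew like $\sqrt{\ln(1/\sigma')}$ the gap condition $\sigma'<\sigma\cdot e^{-O(1/(1-\sigma^2))}$ would become unsatisfiable for small $\sigma'$.

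The fix is small and is exactly the paper's move: keep your threshold but use the true passing probability rather than $1/2$. If $q_i\ge x^*+c+1$, then (since the index of such a revealed item is at least $q_i-c>x^*$ by Jensen's inequality) an agent reaches a fresh item in round $t$ only if $q_i+s_{i,t}<x^*$, i.e.\ $s_{i,t}<-1$, which has probability $1-\Phi(1/\sigma)\le e^{-1/(2\sigma^2)}$ by Fact~\ref{fact:gaussiancdf}. With this, the probability that $k$ items of quality at least $x^*+c+1$ are ever explored is at most $T\cdot e^{-k/(2\sigma^2)}$, so the relevant count is $O(1+\sigma^2\ln T)$. It is then cleaner to bound the maximum quality by the maximum over these high-quality items, each distributed as $\N(0,1-\sigma^2)$ conditioned above $x^*+c+1$ (as in Lemma~\ref{lem:boundM}), rather than over all $n$ explored items; this also sidesteps the fact that your $N\approx\log_2 T/\beta$ with $\beta=1-\Phi\left(\tau/\sqrt{1-\sigma^2}\right)$ is itself not $\Theta_c(\ln T)$ uniformly (it blows up as $\sigma\to1$, an end that happens to be rescued by the $\sqrt{1-\sigma^2}$ prefactor but is better avoided). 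These two changes recover $\sqrt{1-\sigma^2}\cdot\sqrt{2\ln(1+\sigma^2\ln T)}+O_c(1)$ with a constant depending only on $c$.
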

\begin{proof}
Let $I(t)$ be a random variable that indicates the number of explored items after the agent in the $t$-th round has finished exploring, label items in the order in which they were first explored, and let $q_1,q_2,...,q_{I(T)}$ be random variables that denote the quality of the items explored in the first $T$ rounds.  Then, in this notation, $M(\sigma,T)$, the expected positive part of the maximum quality of the explored items at the end of round $T$, is $\E[ \max(0,\max_{i\le I(T)} q_i) ]$ .

To bound this expression, we define the following alternative exploration procedure that is significantly easier to analyze. We will show that there is always a coupling between the original and the alternative processes so that more items are always explored in the latter. Let $x^*$ be the Weitzman index that satisfies $\E[\max(X -x^*,0)] = c$ when $X \sim \N(0,1)$. In our alternative process, every agent goes through every item $i$ starting from $i=1$ and, whenever $q_i \ge x^* + c + 1$, flips a coin that has probability $\Phi(1/\sigma)$ of coming heads. If the coin comes heads he stops, otherwise he continues on to the next item $i+1$. Let $I'(t)$ be the maximum number of items that have been explored by the agents in the first $t$ rounds of this alternative procedure.
  
We define a coupling such that $I(t) \le I'(t)$. In the original process, we have that every agent $t$ has value $q_i + s_{i,t}$ for item $i$. In the alternative process, if $q_i \ge x^* +c+1$, we couple the coin with the event that $s_{i,t} \ge -1$ which also has probability $\Phi(1/\sigma)$. 
Assuming inductively that $I(t-1) \le I'(t-1)$, we will show that $I(t) \le I'(t)$. This follows immediately if $I(t) \le I'(t-1)$ since we have  that $I'(t-1) \le I'(t)$.
Otherwise, when $I(t) > I'(t-1)$, the original agent in the $t$'th round must have explored new item(s). Therefore, he must have $s_{i,t} \le -1$ for all items $i < I(t)$ for which $q_i \ge x^* + c + 1$ as he must have explored all of those first before reaching item $I(t)$. By definition of the coupling, the alternative agent will not stop in any item $i$ such that $i < I(t)$ and thus $I'(t) \ge I(t)$.
  
Thus, $M(\sigma,T) = \E[ \max(0,\max_{i\le I(T)} q_i) ] \le \E[\max(0, \max_{i\le I'(T)} q_i) ]$.  We complete the proof by bounding $\E[\max(0, \max_{i\le I'(T)} q_i) ]$ in Lemma~\ref{lem:boundM}.
\end{proof}

\begin{lemma}\label{lem:boundM}
	Let $I'(T)$ be as defined in the proof of Lemma~\ref{lem:ub}.  Then $$\E[\max(0, \max_{i\le I'(T)} q_i) ]\leq \sqrt{1-\sigma^2} \cdot  \sqrt{2 \ln (1+\sigma^2 \ln T )} + O_c(1).$$
\end{lemma}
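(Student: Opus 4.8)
The plan is to reduce $\E[\max(0,\max_{i\le I'(T)} q_i)]$ to the expected maximum of a \emph{random} number of conditioned Gaussians and then to control that number. The first observation is that only the ``good'' items, i.e.\ those with $q_i \ge x^*+c+1$, matter: every bad item explored contributes quality at most $x^*+c+1 = O_c(1)$, while the alternative process only ever stops at a good item, so at least one good item is always explored. In any single round $t$ the agent passes over good items, flipping an independent $\mathrm{Bernoulli}(\Phi(1/\sigma))$ coin at each and stopping at the first head; hence he explores exactly $K_t+1$ good items, where $K_t$ is geometric with $\Pr[K_t \ge k] = (1-\Phi(1/\sigma))^k$. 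Crucially $K_t$ is a function of the coins only and therefore independent of the qualities. Over $T$ rounds the process reaches at furthest the good item indexed by $K^* = \max_{t\le T} K_t$ tails, so the explored good items are precisely the first $K^*+1$ of them, their qualities being i.i.d.\ draws from $\N(0,1-\sigma^2)$ conditioned on exceeding $x^*+c+1$ and independent of $K^*$. Consequently $\max(0,\max_{i\le I'(T)} q_i)$ equals the maximum of $K^*+1$ such conditioned Gaussians.

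Next I would condition on $K^*$ and apply Lemma~\ref{lem:gaussianmax} with variance $1-\sigma^2$ and threshold $x^*+c+1$, obtaining $\E[\,\cdot \mid K^*] \le \sqrt{1-\sigma^2}\,\sqrt{2\ln(2(K^*+1))} + x^*+c+1$. Since $x\mapsto \sqrt{2\ln(2(x+1))}$ is concave on $[0,\infty)$, Jensen's inequality lets me replace $K^*$ by its mean inside the square root, so it suffices to bound $\E[K^*]$.

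The heart of the argument is bounding $\E[K^*]$ and converting it into the advertised doubly-logarithmic form. Writing $\beta = 1-\Phi(1/\sigma)$, a union bound gives $\Pr[K^*\ge k] \le \min(1, T\beta^k)$, and summing this tail (splitting the sum at $k_0 = \ln T/\ln(1/\beta)$) yields $\E[K^*] \le \ln T/\ln(1/\beta) + O(1)$, where the tail beyond $k_0$ contributes only $O(1)$ because $\beta \le \tfrac12$. Here I would invoke Fact~\ref{fact:gaussiancdf} with $t = 1/\sigma$ to show $\beta = \Phi(-1/\sigma) \le e^{-1/(2\sigma^2)}$, hence $1/\ln(1/\beta) \le 2\sigma^2$ and $\E[K^*] \le 2\sigma^2\ln T + O(1)$. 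Substituting back, $\ln(2(\E[K^*]+1)) \le \ln(1+\sigma^2\ln T) + O(1)$ because the ratio of the two arguments is bounded by an absolute constant; then $\sqrt{a+b}\le\sqrt a+\sqrt b$ together with $\sqrt{1-\sigma^2}\le 1$ absorbs all the remaining additive constants (including $x^*+c+1 = O_c(1)$) into $O_c(1)$, giving the claimed bound $\sqrt{1-\sigma^2}\,\sqrt{2\ln(1+\sigma^2\ln T)} + O_c(1)$.

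The main obstacle is this last step: one must recover the exact $\sigma^2$ factor inside the logarithm, which requires both a sufficiently sharp tail bound on the geometric maxima $K^*$ and the sub-Gaussian estimate $\Phi(-1/\sigma) \le e^{-1/(2\sigma^2)}$ for the per-round stopping probability. A cruder bound on the probability of stopping at a good item would produce the wrong constant multiplying $\sigma^2\ln T$ inside the logarithm, which—after the outer square root—would spoil the asymptotic tightness that Proposition~\ref{prop:inter} needs when comparing different diversity levels.
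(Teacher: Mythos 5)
Your proof is correct and takes essentially the same approach as the paper's: both arguments restrict attention to the ``good'' items with $q_i \ge x^*+c+1$, use the fact that each round's count of good items explored is geometric with success probability $\Phi(1/\sigma)$, combine a union bound over the $T$ rounds with the estimate $1-\Phi(1/\sigma)\le e^{-1/(2\sigma^2)}$ from Fact~\ref{fact:gaussiancdf}, and conclude via Lemma~\ref{lem:gaussianmax}. The only divergence is in the final bookkeeping: the paper truncates the sum $\sum_k m_k p_k$ at $z = 7(1+\sigma^2\ln T)$ and bounds the tail directly, whereas you invoke concavity of $x \mapsto \sqrt{2\ln(2(x+1))}$ and Jensen's inequality to reduce the problem to bounding $\E[\max_{t\le T} K_t] \le 2\sigma^2 \ln T + O(1)$ --- an equally valid, slightly cleaner assembly of the same estimates.
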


\begin{proof}
For notation convenience, define $p_k = \Pr[ |\{ i | i \le I'(T) \wedge q_i \ge x^* + c + 1\}| = k ] $. 
	Letting $m_k$ denote the expected maximum of $k$ random variables distributed according to $\N(0,1-\sigma^2)|_{\ge x^* + c + 1}$, we have that
$
	\E[ \max(0,\max_{i\le I'(T)} q_i) ]\le \sum_k m_k p_k
	$.	From Lemma~\ref{lem:gaussianmax}, we have that $m_k \le \sqrt{1-\sigma^2}  \cdot \sqrt{2 \ln (2 k)} + x^* + c + 1 \le \sqrt{1-\sigma^2}  \cdot \sqrt{2 \ln k} + x^* + c + 3$ and thus we get that
	\begin{align*}
	&\E[\max(0, \max_{i\le I'(T)} q_i) ]\\
 \le& \sqrt{1-\sigma^2} \sum_k \sqrt{2 \ln k} \cdot p_k + (x^*+c+3)\\
 \le& \sqrt{1-\sigma^2} \cdot  \sum_{k \ge z} \sqrt{2 \ln k} \cdot p_k +\sqrt{1-\sigma^2}\cdot \sqrt{2 \ln z} + (x^*+c+3)
	\end{align*}
	for all $z \ge 1$.

  We now bound $p_k= \Pr[ |\{ i | i \le I'(T) \wedge q_i \ge x^* + c + 1\}| = k ]$, i.e. the probability that $k$ items with qualities higher than $x^* + c + 1$ are explored during the $T$ rounds.
  We note that this can happen if in at least one round an agent manages to get $k$ coin flips to land tails before stopping the exploration. Every agent achieves this with probability at most $(1-\Phi(1/\sigma))^{k} \le e^{-\frac {k} {2\sigma^2}}$ and thus for all $T$ rounds, the probability that $k$ items with value higher than $x^*+c+1$ are explored is at most $T \cdot e^{-\frac {k} {2\sigma^2}}$ by a union bound. We thus get that:
	\begin{align*}
&\E[ \max(0,\max_{i\le I'(T)} q_i) ]\\
\le& \sqrt{1-\sigma^2} \cdot \left( \sqrt{2 \ln z} + T \sum_{k \ge z} e^{-\frac {k} {2\sigma^2}} \sqrt{2 \ln k} \right)
	+ (x^*+c+3) \end{align*}
	Setting $z = 7 (1 + \sigma^2 \ln T)$ we have that 
	\begin{align*}
	&T \sum_{k \ge z} e^{-\frac {k} {2\sigma^2}} \sqrt{2 \ln k} 
	\le e^{-7} e^{\frac {z} {7\sigma^2}} \sum_{k \ge z} e^{-\frac {k} {2\sigma^2}} \sqrt{2 \ln k}\\
	\le &e^{-7} e^{\frac {z} {7}} \sum_{k \ge z} e^{-\frac {k} {2}} \sqrt{2 \ln k}
	\le e^{-7} e^{\frac {z} {7}} e^{-\frac {z} {4}}
	\le e^{-7}
	\end{align*}
	and we obtain 
	\begin{align*}
	\E[ \max(0,\max_{i\le I'(T)} q_i) ]\le \sqrt{1-\sigma^2} \cdot  \sqrt{2 \ln (1 + \sigma^2 \ln T )} + x^* + c + 5.
	\end{align*}
\end{proof}
Finally, by combining Lemma \ref{lem:quality}, \ref{lem:interlb} and \ref{lem:ub}, we prove Proposition \ref{prop:inter}.

\begin{prevproof}{Proposition}{prop:inter}
 \noindent 
Let's first prove the upper bound on $A(\sigma,T)$. By Lemma \ref{lem:quality} and Lemma \ref{lem:ub}, we have
\begin{align*}
A(\sigma,T) &\leq M(\sigma,T) + O_{c}(1)\\
&\leq \sqrt{1-\sigma^2}  \cdot\sqrt{2 \ln(1+\sigma^2\ln( T))}+ O_{c}(1)\\
\end{align*}

Now let's prove the lower bound on $A(\sigma,T)$. Notice that if $T \leq \exp(13/\sigma^2)$, 
\[
\sqrt{1-\sigma^2} \cdot\sqrt{2 \ln(1+\sigma^2\ln( T))} \leq \sqrt{2 \ln(1+13)} =  O(1),
\] and therefore the lower bound is trivial. We only need to prove the lower bound when $T > \exp(13/\sigma^2)$. 

 By Lemma \ref{lem:quality} and Lemma \ref{lem:interlb}, we have
\begin{align*}
A(\sigma,T) &\geq \left(1-\frac{1}{\sqrt{T}}\right) (M(\sigma,\sqrt{T}) - c) \\
 &\geq \left(1-\frac{1}{\sqrt{T}}\right) \left(  \sqrt{1-\sigma^2}  \cdot \sqrt{2  \ln\left( \sigma^2 \ln \left(\sqrt{T}\right) \right)} - O(1) -c\right)\\
&\geq  \sqrt{1-\sigma^2}  \cdot\sqrt{2 \ln(1+\sigma^2\ln( T)) - 2} - O_{c}(1) \\
&~~~- \frac{\sqrt{1-\sigma^2}  \cdot\sqrt{2 \ln(1+\sigma^2\ln( T))} }{\sqrt{T} }\\
&\geq  \sqrt{1-\sigma^2}  \cdot\sqrt{2 \ln(1+\sigma^2\ln( T))} - O_{c}(1).
\end{align*}

\end{prevproof}

\subsection{Omitted Proofs of Section \ref{sec:diamond}}
\begin{lemma}[Restatement of Lemma \ref{lem:diamonddiv}]
\label{lem:diamonddivapp}
$A(1,T)$ is at least $(1-o(1)) D$ for $T = \omega(1/p)$. 
\end{lemma}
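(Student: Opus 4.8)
The plan is to exploit the two-valued structure of quality at $\sigma = 1$: since $r_i \sim \N(0,1-\sigma^2) = 0$, we have $q_i = d_i \in \{0, D\}$ and $s_{i,t} \sim \N(0,1)$. The crux is to show that before any diamond has been explored, every agent explores at least one brand-new item. In the revealed-quality model a previously explored non-diamond item has known quality $0$, so its value for the current agent is distributed as $\N(0,1)$ and its Weitzman index equals $x^*$, the index of the $\N(0,1)$ problem. A fresh item instead has value $v = d + s$ with $d = D$ with probability $p$, so its index $\hat x$ solves
\[
p\,\E_s[\max(D + s - \hat x, 0)] + (1-p)\,\E_s[\max(s - \hat x, 0)] = c,
\]
and since the first term is strictly positive, $\hat x > x^* > 0$. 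Hence, before a diamond appears, fresh items carry strictly the largest index, so each agent begins by exploring an unexplored item and at least one new item is searched per round.

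Next I would bound $\tau$, the first round in which a diamond is explored. As long as no diamond has been found, at least $t$ distinct items have been explored by the end of round $t$, each an independent diamond with probability $p$; coupling $\tau$ with the index of the first explored diamond gives $\Pr[\tau > t] \le (1-p)^t$, so $\tau$ is stochastically dominated by a geometric random variable and $\E[\tau] \le 1/p$. Because $T = \omega(1/p)$, the diamond is found well before round $T$ in expectation.

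Finally I would assemble the average utility. Once a diamond's quality $D$ is revealed, an agent can simply explore it and take it, guaranteeing expected utility at least $D - c$: its index exceeds $D$ so it is explored first, and the fresh $\N(0,1)$ subjective score leaves the expectation unchanged. By optimality of the Weitzman rule, $\E[u_t \mid \tau < t] \ge D - c$, and using $\E[u_t] \ge 0$ for the remaining rounds,
\[
A(1,T) \ge \frac{D-c}{T}\sum_{t=1}^T \Pr[\tau < t] \ge \frac{D-c}{T}\Big(T - \E[\tau]\Big) \ge (D-c)\Big(1 - \frac{1}{pT}\Big).
\]
Since $p = \Theta(1/D)$ and $T = \omega(1/p)$, both $c/D$ and $1/(pT)$ are $o(1)$, yielding $A(1,T) \ge (1-o(1))D$.

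The main obstacle is the per-round exploration claim: it requires verifying that the revealed-quality updating at $\sigma = 1$ makes known non-diamond items strictly less attractive in Weitzman index than fresh items, which is precisely where the structural assumption $\sigma = 1$ (so that explored qualities collapse to $0$) and the positivity of the diamond's option value $p\,\E_s[\max(D + s - \hat x, 0)]$ are essential. The remaining steps are routine geometric-tail and averaging estimates.
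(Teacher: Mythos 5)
Your proposal is correct and takes essentially the same route as the paper: the key claim that explored quality-$0$ items have index $x^*$ strictly below the fresh-item index $\hat x$ (so a new item is searched every round until a diamond appears), the geometric tail on the discovery time, the $D-c$ guarantee for every agent after discovery, and the final averaging all mirror the paper's proof, with your stopping-time bookkeeping $\E[\tau]\le 1/p$ serving as an equivalent substitute for the paper's explicit choice of the intermediate round $t=\sqrt{T/p}$. One small repair: ``the first term is strictly positive'' only yields $\E_s[\max(s-\hat x,0)] < c/(1-p)$, which by itself does not place $\hat x$ above $x^*$; instead note that $\E_s[\max(D+s-x^*,0)] > \E_s[\max(s-x^*,0)] = c$ (the diamond mixture strictly first-order stochastically dominates $\mathcal{N}(0,1)$), so the fresh item's option value evaluated at $x^*$ strictly exceeds $c$, and monotonicity of the option value in the threshold gives $\hat x > x^*$.
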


\begin{proof}
When diversity level is $\sigma = 1$, we have $q_i = d_i$. So $q_i$'s realized value is either 0 or $D$. For each explored item, if it has quality $0$, its Weitzman index will be lower than the Weitzman index of an unexplored item. This means that before an item with quality $D$ is explored, each agent will explore at least one new item. Therefore, at round $t$, the probability of having seen a diamond is at least $1-(1-p)^t$. Once a diamond has been found, each successive agent will get utility at least $D-c$.  Therefore, for any $t$,
$A(1,T)\geq (1-(1-p)^t)(1-t/T)(D-c).$

We want to pick $t = \omega(1/p)$ and $t/T = o(1)$. Choosing $t=\sqrt{T \cdot (1/p)}$ suffices. We get 
$
A(1,T) \geq (1-o(1))(1 - o(1)) (D-c) = (1-o(1))D.
$
\end{proof}

\begin{lemma}[Restatement  of Lemma \ref{lem:0divdm}]
\label{lem:0divdmapp}
$A(0,T)$ is at most $D \cdot p +O(1)$ in the diamond model.
\end{lemma}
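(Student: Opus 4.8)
The plan is to mirror the proof of Lemma~\ref{lem:nodivub} almost verbatim, accounting for the extra diamond term $d_i$ in the quality. When $\sigma = 0$, the value of an item to every agent is deterministically its quality $q_i = d_i + r_i$, so once an item is explored its future Weitzman index is exactly $q_i - c$. First I would bound the Weitzman index $x^*$ of an unsearched item by a constant; note that the per-item value distribution is still a mixture ($\N(0,1)$ with an added $D$ with probability $p$), but since $c \ge pD$ the index stays a bounded constant, just as in Lemma~\ref{lem:nodivub}. The key structural observation is unchanged: as soon as some explored item has quality $q_i > x^* + c$, its index exceeds $x^*$, it will be chosen over any fresh item, and its realized value $q_i > x^*$ halts all further exploration.

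The only new feature is the diamond. I would split the analysis according to whether a diamond is ever among the explored items. Conditioned on no diamond being explored, the argument is identical to Lemma~\ref{lem:nodivub}: the first explored item with $r_i > x^* + c$ freezes the society, and its expected quality conditional on exceeding $x^*+c$ is $O(1)$, so the contribution to $A(0,T)$ is $O(1)$. The diamond contributes separately: each agent's utility can be at most $D$ larger than in the diamondless world only when a diamond has actually been discovered, and the probability that this agent's chosen item is a diamond is controlled by $p$. The plan is therefore to write the average utility as the diamondless bound $O(1)$ plus a term accounting for diamond discovery, and to show the latter is at most $D \cdot p + O(1)$.

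Concretely, I would argue that the expected value any agent derives from the diamond component is $\E[d_{i^*,t}] = D \cdot \Pr[\text{chosen item is a diamond}] \le D \cdot p$, using that the diamond indicator $d_i$ is independent of the subjective scores (which are degenerate here) and that a single chosen item carries a diamond with probability at most $p$. Equivalently, I would decompose each agent's utility bound into a diamond-free part, handled exactly as before by bounding $\E_{X\sim\N(0,1)}[X \mid X > c + x^*] - c = O(1)$, and an additive $D\cdot p$ to cover the rare high-value event. Averaging over the $T$ rounds preserves both bounds, giving $A(0,T) \le D\cdot p + O(1)$.

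The main obstacle I anticipate is making the decomposition clean: because the exploration-halting threshold is now driven by the sum $d_i + r_i$, I must be careful that the event ``a diamond is chosen'' and the event ``a high-$r_i$ item freezes the society'' are handled without double counting. The safest route is to bound each agent's utility pointwise by (value attributable to $r_i$, which is $O(1)$ in expectation conditioned on halting) plus (value attributable to $d_i$, which is at most $D$ and occurs with the diamond-selection probability $\le p$), and then average. Since $c \ge pD$ keeps $x^*$ constant and the Gaussian tail computation from Lemma~\ref{lem:nodivub} is reused wholesale, the remaining work is routine bookkeeping rather than a new technical idea.
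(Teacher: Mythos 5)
Your overall architecture is the same as the paper's: bound the Weitzman index $x^*$ of an unsearched item by a constant (using $c \ge pD$), observe that once an explored item has quality $q_i > x^* + c$ all further exploration stops and that item is chosen forever after, and then bound the expected quality of that freezing item. However, one step in your accounting is genuinely wrong as justified: the claim that the chosen item carries a diamond with probability at most $p$, argued ``using that the diamond indicator $d_i$ is independent of the subjective scores.'' Independence from the subjective scores is beside the point. The item that freezes the society is not a uniformly random item; it is selected precisely \emph{because} its quality exceeded $x^*+c$, and since $d_i$ is a component of that quality, the selection is biased toward diamonds. Concretely, writing $r_i \sim \N(0,1)$ for the Gaussian part at $\sigma=0$, the relevant probability is
\[
\Pr\bigl[d_i = D \,\big|\, d_i + r_i > x^* + c\bigr] \;=\; \frac{p\,\Pr[r_i > x^*+c-D]}{p\,\Pr[r_i > x^*+c-D] + (1-p)\Pr[r_i > x^*+c]} \;\approx\; \frac{p}{\Pr[r_i > x^*+c]},
\]
which exceeds $p$ by the constant factor $1/\Pr[\N(0,1) > x^*+c]$ (e.g., roughly $44p$ if $x^*+c=2$). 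So the inequality $\E[d_{i^*,t}] \le D\cdot p$ is false, and your proposed pointwise decomposition does not close as stated.

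The gap is repairable, and the repair explains why the lemma is stated the way it is. Since $x^*+c = O(1)$, the probability $\Pr[q_i > x^*+c]$ is bounded below by a constant, so the conditional diamond probability is $O(p)$ rather than at most $p$; the diamond contribution to utility is therefore $O(pD) \le O(c) = O(1)$ by the model assumption $c \ge pD$, and can simply be absorbed into the additive $O(1)$ term (indeed, since $p = \Theta(D^{-1})$, the bound $D\cdot p + O(1)$ is equivalent to $O(1)$ in this model). Equivalently, bound the unconditional quantity $\E\bigl[d_i \mathbf{1}\{q_i > x^*+c\}\bigr] \le pD$ and divide by the constant $\Pr[q_i > x^*+c]$. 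With the diamond term stated as $O(pD)$ instead of $pD$, your argument goes through and essentially coincides with the paper's, which bounds $\E[q_i \mid q_i > c+x^*] - c$ by $D\cdot p$ plus a Gaussian hazard-rate term via Fact~\ref{fact:gaussianhr} --- a step that, as written, glosses over the same selection-bias issue and is likewise only valid up to the $O(pD)=O(1)$ slack.
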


\begin{proof}
The proof plan is similar to the proof of Lemma \ref{lem:nodivub}. We first bound by a constant the Weitzman index $x^*$ of an unsearched item, i.e.\ one whose quality is not yet revealed. We then argue that once some item with sufficient quality has been identified (with quality $q_i > x^*+c$), no other item will be explored in future rounds. We conclude the proof by upper-bounding the expected quality of such an item conditional on its quality being higher than $x^*+c$. This directly gives a bound for the average utility. 

We start by the Weitzman index of an unsearched item. Let $x^*$ be its Weitzman index. Let $X \sim \N(0,1)$, by the defintion of Weitzman index,
$
c =  p \cdot  \E[\max(X + D -x^*,0)] + (1-p) \cdot \E[\max(X  -x^*,0)].
$

We give upper bounds on $  \E[\max(X + D -x^*,0)]$ and $ \E[\max(X  -x^*,0)]$ separately.
For $\E[\max(X + D -x^*,0)]$, since $x^* > 0$, we have
\begin{align*}
&\E[\max(X + D -x^*,0)]\leq \E[ \max(X+D,0)] \\
=& \E[X+D] + (\E[-X|-X \geq D] - D) \cdot \Pr[-X \geq D] \\
=& D + \left(\frac{\varphi(D)}{1-\Phi(D)} - D\right) (1- \Phi(D))\\
\leq& D + \left(\frac{D}{2} + \frac{\sqrt{D^2+4}}{2} - D \right)\cdot \frac{1}{\sqrt{2\pi}\exp(D^2/2)D}\\
\leq &D +\frac{1}{\sqrt{2\pi}\exp(D^2/2)D^2}.
\end{align*}

For $\E[\max(X  -x^*,0)]$, we have
\begin{eqnarray*}
&&\E[\max(X -x^*,0)] = \E[X -x^* | X-x^* >0] \cdot Pr[X-x^* >0]\\
 &=& \left(-x^* + \frac{\varphi(x^* )}{\Phi(-x^*)}\right) \cdot \Phi(- x^*) = (-x^*) (1- \Phi( x^*)) + \varphi(x^*) \\
 &\leq& (-x^*) \left(\frac{\varphi(x^*)}{x^*} -\frac{\varphi(x^*)}{(x^*)^3} \right) + \varphi(x^*)= \frac{\varphi(x^*)}{(x^*)^2} \\
 &=& \frac{1}{\sqrt{2\pi} \cdot (x^*)^2 \cdot \exp( (x^*)^2/2)}.
\end{eqnarray*}
To sum up we get
$
c \leq p\cdot D +\frac{p}{\sqrt{2\pi}\exp(D^2/2)D^2} +\frac{1-p}{\sqrt{2\pi} \cdot (x^*)^2 \cdot \exp( (x^*)^2/2)}.
$
For notation convenience, define
$
\alpha = \frac{c - p \cdot D - \frac{p}{\sqrt{2\pi}\exp(D^2/2)D^2}}{1-p}.
$ 
We have
$
\sqrt{2\pi} \cdot (x^*)^2 \cdot \exp( (x^*)^2/2) \leq \frac{1}{\alpha}.
$
As $c - p \cdot D = \Omega(1)$ and $D = \omega(1)$, we have $\alpha = \Omega(1)$. 
This implies an upper bound on the Weitzman index of an unsearched item: 
$
x^* \leq \max \left(1, \sqrt{ \ln \left( \frac{1}{2\pi \alpha^2}\right)}\right) = O(1).
$

When an item has been searched with observed quality $q_i$, since the diversity level is 0, the values of this item for future agents are determinstically $q_i$. Therefore the Weitzman index of this item is simply $q_i -c$. 

Once some item has quality $q_i > c + x^*$, its Weitzman index is larger than $x^*$ which means it will be searched before any unsearched items. And once this item is searched in a round, we know its realized value is $q_i$ which is larger than $x^*$. Therefore no more unsearched items will be searched.

Let $Q$ be the quality of the first item with quality larger than $ c + x^*$. We have $\E[Q] = \E[q_i|q_i > c+x^*]$. We know that starting from the round this item is searched, each agent's utility is at most $Q - c$ and before this item is ever searched, each agent's utility is at most $c+x^* - c \leq Q - c$. Therefore we can bound the average utility as following: 
\begin{eqnarray*}
A(0,T) &\leq& \E[Q-c] \leq \E [q_i|q_i > c+x^*] - c \\ 
&\leq& D \cdot p+ \frac{\varphi(c+x^*)}{1- \Phi(c+x^*)} - c \\
&\leq&D \cdot p+  \frac{x^* - c + \sqrt{(x^*+c)^2 + 4}}{2}\\
&\leq& D \cdot p+x^*+1 \text{~~~~~~~~~~~(by Fact \ref{fact:gaussianhr})}\\
&=& D \cdot p+O(1).
\end{eqnarray*}
\end{proof}

We proceed to show Lemma \ref{lem:dmub} which is the main technical part behind the proof of Theorem~\ref{thm:diamond} and gives a bound on the expected maximum quality of the items explored in $T$ rounds, i.e. $M(\sigma,T)$. This implies an upper bound on the expected average utility $A(\sigma,T)$ using an identical argument to Lemma~\ref{lem:quality}. We omit the proof because of space limit. It is very similar to the proof of Lemma \ref{lem:ub}.

\begin{lemma}
\label{lem:dmub}
For any diversity level $0<\sigma<1$, in the diamond model, 
\begin{align*}
&A(\sigma, T) \leq M(\sigma,T) \\
\leq& \sqrt{1-\sigma^2}  \cdot\sqrt{2 \ln(1+\sigma^2\ln( T))}+ \frac{pD (2\sigma^2 \ln(D) + 2\sigma^2 \ln(T))}{1-\Phi\left(\frac{x^*+c+1}{\sqrt{1-\sigma^2}}\right)} + O_{c}(1) .
\end{align*}
where $x^*$ is the Weitzman index of an unsearched item. 
\end{lemma}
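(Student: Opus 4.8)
The plan is to transcribe the proof of Lemma~\ref{lem:ub} almost verbatim, treating the diamond contribution as an additive correction. The inequality $A(\sigma,T)\le M(\sigma,T)+O_c(1)$ (the additive constant being absorbed into the final $O_c(1)$) is obtained exactly as in Lemma~\ref{lem:quality}: the expected utility of the agent in the last round dominates the running average and exceeds the best explored quality by at most a constant depending on $c$. This step only requires that the Weitzman index $x^*$ of an unexplored item is still $O(1)$ in the diamond model, which is precisely the computation carried out in the proof of Lemma~\ref{lem:0divdm} and which goes through because $c\ge pD$. It therefore suffices to bound $M(\sigma,T)=\E[\max(0,\max_{i\le I(T)}q_i)]$.

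First I would reuse the coupling of Lemma~\ref{lem:ub}. Because every unexplored item carries the same index $x^*$, the order in which items are explored, and in particular whether a given item is ever explored, is determined by the realized values of previously explored items and is independent of that item's own quality. Hence the coin-flip coupling — flip a coin of bias $\Phi(1/\sigma)$ at each item with $q_i\ge x^*+c+1$, identifying heads with the event $s_{i,t}\ge -1$ — transfers unchanged: a diamond automatically satisfies $q_i\ge x^*+c+1$ and its subjective score is still $\N(0,\sigma^2)$, so the bias $\Phi(1/\sigma)$ is unaffected. This again yields an alternative process with $I(t)\le I'(t)$ and the tail estimate $p_k:=\Pr[k\text{ items of quality}\ge x^*+c+1\text{ are explored}]\le T e^{-k/(2\sigma^2)}$.

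Next I would split off the diamond part by writing $q_i=d_i+r_i$ with $r_i\sim\N(0,1-\sigma^2)$ and $d_i\in\{0,D\}$, and use $\max_i q_i\le \max_i r_i+\max_i d_i$ to obtain
\[
M(\sigma,T)\le \E\!\left[\max\!\big(0,\ \max_{i\le I'(T)}r_i\big)\right] + D\cdot\Pr[\exists\text{ explored diamond}].
\]
The first term is exactly the quantity bounded in Lemma~\ref{lem:boundM}, and via the identical combination of the coin-flip estimate on $p_k$ with the Gaussian-maximum bound of Lemma~\ref{lem:gaussianmax} it yields $\sqrt{1-\sigma^2}\cdot\sqrt{2\ln(1+\sigma^2\ln T)}+O_c(1)$. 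For the second term I would use that the event ``item $i$ is explored'' is independent of $d_i$ (same symmetry as above), so $\Pr[\exists\text{ explored diamond}]\le\E[\text{number of explored diamonds}]=p\cdot\E[\text{number of explored items}]$, and then relate the number of explored items to the number of above-threshold items through the factor $1-\Phi\big(\tfrac{x^*+c+1}{\sqrt{1-\sigma^2}}\big)$, the probability that a normal quality clears the threshold.

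The hard part is this last estimate. Bounding the total number of explored items (equivalently, how long exploration persists before a sufficiently good item is locked in) requires controlling the rare diamond events down to scale $1/D$ while trading the rarity $p=\Theta(1/D)$ against the number of exploration opportunities over $T$ rounds; it is this accounting, run through the union bound on $p_k$, that produces the $\ln D$ alongside the $\ln T$ and hence the correction term $\frac{pD(2\sigma^2\ln D+2\sigma^2\ln T)}{1-\Phi((x^*+c+1)/\sqrt{1-\sigma^2})}$. Everything else is a routine re-use of the coupling and Gaussian-tail bounds of Lemmas~\ref{lem:ub}, \ref{lem:boundM}, and~\ref{lem:gaussianmax}.
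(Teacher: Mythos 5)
Your skeleton matches the paper's intended route (the paper actually omits this proof, saying only that it is ``very similar to the proof of Lemma~\ref{lem:ub}''): the reduction $A(\sigma,T)\le M(\sigma,T)+O_c(1)$ via the argument of Lemma~\ref{lem:quality}, using the computation in Lemma~\ref{lem:0divdm} that the unexplored index $x^*$ remains $O(1)$; the transfer of the coin-flip coupling, which is indeed valid since an explored item with $q_i\ge x^*+c+1$ still has index at least $q_i-c>x^*$, so an agent who explores a new item must have had $s_{i,t}\le-1$ at every such item; the split $q_i=d_i+r_i$; and the treatment of the $r_i$ part by rerunning Lemma~\ref{lem:boundM}. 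One small caveat on the last point: the process $I'$ is driven by the events $\{q_i\ge x^*+c+1\}$, not $\{r_i\ge x^*+c+1\}$, so the $r_i$ of an explored diamond is not a truncated Gaussian; it is, however, stochastically dominated by $\N(0,1-\sigma^2)|_{\ge x^*+c+1}$, so the bound $m_k$ from Lemma~\ref{lem:gaussianmax} still applies. All of this is sound.

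The genuine gap is the step you yourself label ``the hard part'': the bound on $D\cdot\Pr[\exists\text{ explored diamond}]$, which is the entire content of the new correction term, is never carried out, and your account of where $\ln D$ comes from is inconsistent with your own method. Your Wald-type route is legitimate: $\Pr[\exists\text{ diamond}]\le p\,\E[I'(T)]$ by independence of $\{I'(T)\ge j\}$ from item $j$'s attributes, and $\E[I'(T)]\le \E[N]/((1-p)(1-\Phi(\frac{x^*+c+1}{\sqrt{1-\sigma^2}})))$ where $N$ counts above-threshold explored items. But completing it with the tail bound $\Pr[N\ge k]\le Te^{-k/(2\sigma^2)}$ gives $\E[N]=\sum_{k\ge 1}\Pr[N\ge k]\le 2\sigma^2\ln T+O(1)$, so your correction term would be $pD(2\sigma^2\ln T+O(1))/((1-p)(1-\Phi(\cdot)))$ with no $\ln D$ anywhere. (This cleaner bound does imply the lemma's in the relevant regime of constant $\sigma$ and $D\to\infty$, since the omitted $2\sigma^2\ln D$ term is positive and absorbs the leftover constants, but you would need to argue that.) The $\ln D$ in the stated bound arises from a different, tail-probability accounting: write $\Pr[\exists\text{ diamond}]\le\Pr[N\ge k^*]+k^*\cdot p/((1-p)(1-\Phi(\cdot)))$ and observe that making $D\cdot\Pr[N\ge k^*]\le D\,T e^{-k^*/(2\sigma^2)}\cdot O(1)$ an $O(1)$ quantity forces the truncation point $k^*=2\sigma^2\ln(TD)=2\sigma^2\ln T+2\sigma^2\ln D$; it is precisely the survival of the tail under multiplication by $D$ that generates the $\ln D$. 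As written, your proposal asserts the target formula without performing either computation, so the decisive estimate is missing.
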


\begin{prevproof}{Theorem}{thm:diamond}
Given Lemma~\ref{lem:dmub}, we can now complete the proof of Theorem~\ref{thm:diamond}. We know that $x^*$ (the Weitzman index of an unsearched item) is constant as shown in the proof of Lemma~\ref{lem:0divdm}. So for a constant $\sigma < 1$, we know that $\frac{x^*+c+1}{\sqrt{1-\sigma^2}}$ is also a constant and therefore $\Phi\left(\frac{x^*+c+1}{\sqrt{1-\sigma^2}}\right)$ is bounded away from 1. Plugging $T = \exp(o(D))$ into Lemma \ref{lem:dmub}, we get that $A(\sigma,T) = o(D)$. As $A(1,T) = (1-o(1)) D$ by Lemma~\ref{lem:diamonddiv}, the theorem follows.
\end{prevproof}

\newpage
\bibliographystyle{alpha}
\bibliography{bib}


\end{document}